\theoremstyle{plain}
\newtheorem{theorem}{Theorem}[section]
\newtheorem{lemma}[theorem]{Lemma}
\newtheorem{proposition}[theorem]{Proposition}
\theoremstyle{definition}
\newtheorem{definition}[theorem]{Definition}
\theoremstyle{remark}
\newtheorem{remark}{Remark}
\def\({\left(}
\def\){\right)}
\def\One{{\mathds {1}}}
\def\la{\lambda}
\def\bydef{:=}
\DeclareMathOperator{\MP}{MP}
\def\MPP{{\operatorname{MPP}}}
\def\si{\sigma}
\def\iy{\infty}
\def\E{\ensuremath{\mathbb {E}\,}}
\def\d{\,\mathrm{d}}
\def\Pp{\ensuremath{\mathbb {P}}}
\def\t{\tau}
\def\Pois{{\operatorname{Pois}}}
\def\vs{\, \vert\,}
\newcommand{\dimensionofmpp}{d}
\newcommand{\numdim}{\dimensionofmpp}
\def\D#1{\Delta {#1}}
\def\vk{\textbf{k}}
\def\vx{\textbf{x}}
\def\vp{\textbf{p}}
\def\vz{\textbf{z}}
\def\norm#1{{\Vert {#1}\Vert}}
\def\nvk{\norm{\vk}}
\def\R{\ensuremath{\mathbb {R}}}
\def\Z{\ensuremath{\mathbb {Z}}}
\def\cov{\operatorname{Cov}}
\def\corr{ \operatorname{Corr}}
\def\extr{\operatorname{extremize}}
\def\eqod{{\stackrel{d}{=}}}
\newcommand{\PP}{{\mathbb{P}}}
\def\abs#1{\vert\, #1\,\vert}
\def\Vs{\,\Big\vert\,}
\begin{document}

\title{Backward Simulation of Multivariate Mixed Poisson Processes}

\author[1]{Michael Chiu\footnote{chiu@cs.toronto.edu}}
\affil[1]{Department of Computer Science, University of Toronto}
\author[1]{Kenneth R. Jackson\footnote{krj@cs.toronto.edu}}
\author[2]{Alexander Kreinin\footnote{alex.kreinin@sscinc.com}}
\affil[2]{FE, Research and Validation, SS\&C Technologies}



\maketitle

%

\begin{abstract}
The simulation of correlated multivariate Poisson processes with negative correlation between their components has many important applications in Finance, Insurance, Geophysics, and many other areas of applied probability. Introduced in our earlier work, the Backward Simulation (BS) approach to the simulation of correlated multivariate Poisson processes is able to capture a wide range of correlation, including extreme positive and extreme negative correlation, that is not possible with other approaches such as the forward simulation approach. Moreover, the BS approach enables simple and efficient generation of sample paths of correlated multivariate Poisson processes. In this work, we extend the BS approach to multivariate mixed Poisson processes.
\end{abstract}
%
%
\section{Introduction}
\label{sec:introduction}

The simulation of dependent Poisson processes is an important problem having many applications in Insurance, Finance, Geophysics and many other areas of applied probability---see \cite{Aue,baeKrenin,BN1,BNO,Bock2,Chav,EmbPuc,Panj,Pet,Shev} and references therein. For example, in Operational risk, estimating the losses resulting from operational events requires the simulation of multivariate Poisson processes that must be calibrated to historical correlation matrices of operational events; see \cite{KAY}. The Poisson and Negative Binomial processes are some of the most popular underlying models amongst practitioners for describing the operational losses of the business units of a financial organization and the moments of claim arrivals in the insurance industry \cite{McNeil,Nesl}. Dependence between Poisson processes can be achieved by various operations applied to independent processes. One of the most popular approaches, often considered in actuarial modeling, is the Common Shock Model (CSM) \cite{McNeil,Dian,Shev}, where a third Poisson process is used to couple two independent processes. For example, let $(\nu_t^{(1)},\nu_t^{(2)},\nu_t^{(3)})$ be three independent Poisson processes with intensities $(\lambda_1,\lambda_2,\lambda_3)$, each defined as 
\begin{equation*}
	\nu_t^{(j)} = \sum^\infty_{i=0} \One (T_i^{(j)} \leq t)
\end{equation*}
for $j=1,2,3$ where the $T_i^{(j)}$, exponentially distributed random variables, are the arrival moments corresponding to the $j^{\mathrm{{th}}}$ process. Through superposition, we can obtain two correlated Poisson processes $N_t^{\,(1)} = \nu_t^{(1)} + \nu_t^{(2)}$ and  $N_t^{\,(2)} = \nu_t^{(2)} + \nu_t^{(3)}$, where
\begin{equation*}
N_t^{\,	(1)} = \big( \nu_t^{(1)} + \nu_t^{(2)} \big) := \sum^2_{j=1}\sum^\infty_{i=0} \One (T_i^{(j)} \leq t) 
\end{equation*}
and 
\begin{equation*}
N_t^{\,	(2)} = \big( \nu_t^{(2)} + \nu_t^{(3)} \big) := \sum^3_{j=2}\sum^\infty_{i=0} \One (T_i^{(j)} \leq t).
\end{equation*}

The correlation coefficient between the Poisson processes $N_t^{\,(1)}$ and $N_t^{\,(2)}$, 
having intensities $\la = \la_1 + \la_2$ and $\mu = \la_2 + \la_3$, in the CSM satisfy
$$ \rho = \frac{ \lambda_2}{\sqrt{\la \cdot \mu} }.$$
The latter relation immediately implies
$$ 0\le \rho \le \sqrt{\frac{\min(\la, \mu) }{\max(\la, \mu) }}. $$
It is clear that, in such a model, negative correlations cannot be obtained and that correlations are constant in time. The extreme correlation problem was considered in \cite{Griff} and \cite{Nels} where an optimization problem for the joint distribution was solved numerically. The problem was reduced to that of random vectors having specified marginal distributions in \cite{Kre}, where the Extreme Joint Distribution (EJD) method, a pure probabilistic, efficient, and rather simple algorithm to find the joint distributions with extreme correlations applicable to \textit{any} discrete marginal probability distribution was proposed. Connections with some classical results obtained in \cite{Frechet}, \cite{Hoeffd}, and \cite{Whitt} were also discussed. The Backward Simulation (BS) method, in conjunction with the EJD method, was developed in order to address the restrictions in the correlation structure of multivariate Poisson processes constructed using classical approaches such as the CSM. The Backward Simulation approach, considered in \cite{KAY}, allows for a wider range of correlations, both positive and negative, to be attained in comparison to the CSM. Moreover, BS allows for a dynamic correlation structure versus static correlation in the CSM; specifically, it is a linear function, in time, of the terminal correlation, $\rho(T)$, at the end of the simulation interval $[0,T]$ \cite{Kre}.

There are two general approaches to the simulation of multivariate Poisson processes---Forward and Backward simulation. The Forward approach consists of generating exponentially distributed inter-arrival times until the simulation time is at or past the simulation interval $[0,T]$. Our Backward approach is based on exploiting the conditional uniformity of Poisson processes---we first construct a joint distribution satisfying the marginal distributions with the desired correlation structure at the terminal simulation time $T$ and then generate the corresponding number of arrival moments using the conditional uniformity of the arrival times\footnote{This is also known as the \textit{order statistic property} \cite{crump1975point}}. This is one of the major advantages of the Backward approach---only the ability to sample from a suitable joint distribution at the terminal time is required; the arrival moments of the multivariate Poisson process can be generated uniformly in a coordinate-wise manner. The BS approach was extended in \cite{Kre} to the class of bivariate processes containing both Poisson and Wiener components. It also led to the introduction of the Forward Continuation (FC) of Backward Simulation, a method for extending the process simulated by BS to subsequent intervals $[nT,(n+1)T]$, where $n$ is some integer, that preserves the joint distribution at various grid points $nT$ \cite{akm}. 

The EJD method enables the construction of joint distributions that exhibit extreme dependence between the components; in other words, the EJD method constructs extreme joint distributions that extremize $\rho(T)$. Extreme joint distributions are used to generate extreme admissible correlations, from which all correlations within the admissible range can be obtained. It was extended to the multivariate setting in \cite{akm}. 

%
%


%
%
\bigskip
In this paper, we extend the Backward Simulation approach for Poisson processes to the class of Mixed Poisson processes (MPPs), which are a natural generalization of the class of Poisson processes that can be represented as a Poisson process with a random intensity \cite{Grand}. Our contribution is a method of simulation for multivariate mixed Poisson processes such that 
\begin{enumerate}
    \item any desired correlation that is admissible at the terminal simulation time can be matched,
    \item the correlation structure is a function of time.
\end{enumerate}
 Moreover, we describe the time structure of correlations for this class of processes and analyze the Forward Continuation of the Backward Simulation in the finite interval. This approach allows us to extend the model to arbitrary times.

\bigskip
The plan of this paper is as follows. In Section~\ref{sec:MPP}, we briefly discuss the basics of MPPs. Section~\ref{sec:BS} extends the BS method for Poisson processes to multivariate MPPs, allowing us to fill in our process back to time 0. The extension to MPPs is first discussed in the bivariate setting. Section~\ref{sec:ejd} briefly reviews the EJD method, necessary for the construction of joint distributions needed at the terminal time. We also discuss in Section \ref{sec:ejd} how to sample from Extreme Joint Distributions. In Section~\ref{sec:fb_mpp}, we extend the FC approach for Poisson processes to MPPs. This allows us to propagate the process forward in time to some possibly infinite horizon. 
Finally, we make some concluding remarks in Section~\ref{sec_cr}.

%
%

\section{Mixed Poisson Process}
\label{sec:MPP}

We begin by reviewing some properties of MPPs. The main results of the theory of MPPs can be found in \cite{Grand}. Recent results on the characterization of the multivariate MPP are in \cite{Zocher}. We consider a counting process
\begin{equation}
X_t = \sum_{i=1}^\iy \One\, (T_i\le t)
\end{equation}
with arrival moments $0 < T_1 < \dots < T_i < \cdots$, where $\One( \cdot )$ is the indicator function. Also, for convenience, we let $T_0 = 0$. The classical Poisson process is defined as a process with independent increments such that the 
inter-arrival times between the events $\D{T_i} \bydef T_i - T_{i-1}$ form a sequence of independent 
identically distributed random variables having an exponential distribution with parameter $\la$. 
It is well known that the number of events of $X_t$ in the interval $[0, t]$ has the Poisson distribution with 
parameter $\la t$:

\begin{equation}
\Pp(X_t=k) = e^{-\la t}\frac{(\la t)^k}{k!}, \quad k=0,1,2,\dots \quad t>0.
\label{eq:poiss_distr}
\end{equation}

A natural generalization of the Poisson distribution is to randomize the intensity parameter $\lambda$, leading to the Mixed Poisson Distribution (MPD). 

\begin{definition}[Mixed Poisson Distribution \cite{Grand}]
A discrete random variable $X$ is said to be mixed Poisson distributed, MP(U), with structure distribution U, if
\begin{align}
p_k := & \,\,\, \PP(X = k) = \E \big[ \frac{(\Lambda)^k}{k!} e^{-\Lambda} \big] \nonumber \\
= & \,\, \int_{0}^\infty \frac{(\lambda)^k}{k!} \, e^{-\lambda} dU(\lambda), \quad k=0,1,2,\dots
\label{eq:def_mpd}
\end{align}
where $\Lambda$ is a random variable distributed according to $U$.
\end{definition}

The structure distribution $U$ can be viewed as a prior distribution, which allows us to view (\ref{eq:poiss_distr}) as a \textit{conditional distribution}, given a realization of the intensity parameter $\Lambda = \lambda$ and (\ref{eq:def_mpd}) as an unconditional distribution. Another interpretation of (\ref{eq:def_mpd}) is that it is a mixture of Poisson distributions.

\begin{definition}[Mixed Poisson Process]
$X_t$ is a MPP if it is $\MP(U)$-distributed for all $t \geq 0$. The MPP is a Poisson process with a non-negative random intensity. 
\end{definition}

Lundberg \cite{Lundb} also showed that there exists a MPP for each structure distribution $U$ and that the process is uniquely defined. 

In what follows, we denote by $\MPP(U)$, the class of MPPs with structure distribution $U$. It is not difficult to see that if $X_t \in \MPP(U)$, then the probability generating function takes the form
\begin{equation}
G(t; z)\bydef \E[z^{X_t}]=\int_0^\iy e^{x t(z-1)}\d U(x)
\label{eq:mmp_moment_generating_fn}
\end{equation}
and
$$\E[X_t]=\bar\la t, \quad \si^2(X_t)=\bar\la t + \si^2(\la) t^2 $$
where
$$\bar\la = \E[\lambda] = \int_0^\iy x \d U(x), \quad \si^2(\la) = \int_0^\iy (x-\bar\la)^2 \d U(x). $$

%
%

\subsection{Conditional distribution of arrival moments}
 
It is well known that the intervals $\Delta T_i = T_i - T_{i-1}$ of a Poisson process with intensity $\la$ form a sequence of independent, exponentially distributed 
random variables:
$$\Pp (\Delta T_i \le t) = 1 - e^{-\la t}, \quad t\ge 0; \quad i=1, 2, \dots $$
This forms the basis of the forward approach to the simulation of Poisson processes. Given $n$ events to be generated and a positive intensity $\lambda$, one can sequentially generate exponentially distributed intervals, $\Delta T_i$, and determine the arrival moments,
$$ T_n=\sum_{i=1}^n \Delta T_i. $$
However, there is an alternative approach based on the fundamental property of the conditional distribution of the arrival moments \cite{RCont}. Let ${\cal T}=\{T_1$, $T_2,\dots, T_n\}$ be a sequence of $n$ independent random variables having a uniform distribution in the interval $[0, T]$:
$$
\Pp (T_i \le t ) = \frac{t}{T}, \quad 0 \le t\le T. 
$$
Denote by $\t_k$ the $k$th order statistic of ${\cal T}$, $(k=1, 2, \dots, n)$:
\begin{equation}
\label{eqn:order_statistic}
\t_1=\min_{1\le k\le n} T_k, \t_2=\min_{1\le k\le n}\{ T_k: T_k>\t_1\}, \dots, \t_n=\max_{1\le k\le n} T_k.
\end{equation}
\begin{theorem}\label{thm_ucd}
The distribution of the arrival moments of a Poisson process, $X_t$, with finite intensity in the interval $[0,T]$ conditional on the number of arrivals, $X_T = n$, coincides with the distribution of the order statistics:
\begin{equation}
\Pp (T_k \le t \vs X_T = n) = \Pp (\t_k\le t), \quad 0 \le t\le T, \quad k=1, 2,\dots, n.
\label{eq_ucd}
\end{equation}
\end{theorem}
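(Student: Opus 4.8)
The plan is to prove Theorem~\ref{thm_ucd} by conditioning on the intensity and reducing to the classical Poisson case. The key observation is that, conditional on $\Lambda = \lambda$, the process $X_t$ is an ordinary Poisson process with intensity $\lambda$; the randomization of the intensity is what makes $X_t$ a mixed Poisson process, but it does not affect the \emph{conditional} law of the arrival moments once we also condition on the count. So first I would establish (or recall) the statement for a genuine Poisson process with fixed finite intensity $\lambda > 0$: given $X_T = n$, the vector $(T_1,\dots,T_n)$ is distributed as the order statistics of $n$ i.i.d.\ uniform variables on $[0,T]$. The standard route is to write the joint density of $(T_1,\dots,T_n, X_T = n)$ using the independent exponential inter-arrival times $\Delta T_i$ together with the event $\{\Delta T_{n+1} > T - T_n\}$, observe that all the $\lambda$-dependent exponential factors combine into $\lambda^n e^{-\lambda T}$ on the simplex $0 < t_1 < \dots < t_n < T$, and then divide by $\Pp(X_T = n) = e^{-\lambda T}(\lambda T)^n/n!$ to obtain the constant density $n!/T^n$ on that simplex --- which is exactly the joint density of the order statistics of $n$ uniforms. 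From the joint law, \eqref{eq_ucd} follows by marginalizing to $\t_k$.

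Next I would lift this to the mixed Poisson process. Write $U$ for the structure distribution and condition on $\Lambda = \lambda$. For any $t \in [0,T]$, $k \le n$, and any measurable event in path space, the tower property gives
\begin{equation*}
\Pp(T_k \le t \vs X_T = n) = \frac{\Pp(T_k \le t,\, X_T = n)}{\Pp(X_T = n)} = \frac{\int_{0-}^{\iy} \Pp(T_k \le t,\, X_T = n \vs \Lambda = \lambda)\, \d U(\lambda)}{\int_{0-}^{\iy} \Pp(X_T = n \vs \Lambda = \lambda)\, \d U(\lambda)}.
\end{equation*}
By the fixed-intensity case just established, $\Pp(T_k \le t,\, X_T = n \vs \Lambda = \lambda) = \Pp(\t_k \le t)\cdot \Pp(X_T = n \vs \Lambda = \lambda)$, where the factor $\Pp(\t_k \le t)$ is the order-statistic probability and crucially \emph{does not depend on $\lambda$}. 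Pulling that constant factor out of the numerator integral, it cancels against the denominator, leaving $\Pp(T_k \le t \vs X_T = n) = \Pp(\t_k \le t)$, which is the claim. The same argument applies verbatim to the full joint law of $(T_1,\dots,T_n)$, not just the one-dimensional marginals, so one actually gets the stronger statement that conditionally on $X_T=n$ the arrival moments are the order statistics of $n$ uniforms regardless of $U$.

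A couple of technical points deserve care. One should handle the atom of $U$ at $0$ (written $\int_{0-}^\iy$ in \eqref{eq:def_mpd}): if $\Lambda = 0$ then $X_T = 0$ almost surely, so that part of the mixture contributes nothing to $\Pp(X_T = n)$ for $n \ge 1$ and can be discarded; for $n = 0$ the statement is vacuous. One should also note that the theorem as stated presupposes $X_T = n$ is an event of positive probability, i.e.\ $\Pp(X_T = n) > 0$, which holds as long as $U$ is not concentrated at $0$; this is the finite-intensity hypothesis in disguise. The main obstacle --- really the only non-routine part --- is making the conditioning rigorous: $\Lambda$ may be continuous, so ``conditional on $\Lambda = \lambda$'' must be read through a regular conditional distribution, and one must justify interchanging the marginalization integral with the conditioning and the Fubini step used to pull the $\lambda$-free factor $\Pp(\t_k \le t)$ out of the integral. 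Since everything in sight is a nonnegative measurable function of $(\lambda, t_1,\dots,t_n)$, Tonelli's theorem handles the interchange cleanly, and the disintegration of $\Pp$ over the law of $\Lambda$ is exactly the content of the definition of the MPP as a Poisson process with random intensity; so the obstacle is more one of careful bookkeeping than of genuine difficulty.
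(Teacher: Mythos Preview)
The paper does not actually prove Theorem~\ref{thm_ucd}: it is stated as a classical, well-known fact (the reference to \cite{RCont} just before the theorem signals this), and the exposition immediately moves on to the converse statement, Proposition~\ref{prop_charact}. So there is no ``paper's own proof'' to compare against.

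Your argument for the fixed-intensity Poisson case is the standard one and is correct. However, you appear to have misread the hypothesis: Theorem~\ref{thm_ucd} as stated concerns an ordinary Poisson process with finite (deterministic) intensity, not a mixed Poisson process. The ``lifting'' step in which you condition on $\Lambda$ and integrate against $U$ is therefore not needed here. It is a perfectly valid and useful observation --- and in fact it anticipates the content of Theorem~\ref{thm_pb1} later in the paper, where the authors \emph{do} show that the conditional-uniformity property persists for MPPs --- but for the theorem as written, your first paragraph already contains a complete proof. The technical caveats you raise about the atom of $U$ at $0$ and regular conditional distributions are likewise irrelevant to the stated result, since there is no mixing distribution in play.
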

 
The converse statement was proved in \cite{Kre}:
\begin{proposition}\label{prop_charact}
If a process, $X_t$, is represented as a random sum
$$ X_t= \sum_{k=1}^{N} \One (T_k < t) $$ 
where $\{T_k\}_{k=1}^N$ are independent, identically distributed random variables having a uniform conditional distribution,
$$ \Pp ( T_k \le t \vs N ) = tT^{-1}, \,\, k=1, 2, \dots, N$$
in the interval $[0, T]$ and the random variable $N \sim\Pois(\la T)$, then $X_t$ is a Poisson process 
with intensity $\la$ in the interval $[0, T]$.
\end{proposition}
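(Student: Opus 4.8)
The plan is to show that the process $X_t$ has the same finite-dimensional distributions as a Poisson process of intensity $\la$ on $[0,T]$. Since $X_t$ is, by the random-sum representation, an integer-valued nondecreasing step process with $X_0 = 0$, and since a process with independent, Poisson-distributed increments is a Poisson process in the sense used here, it suffices to prove: for every partition $0 = t_0 < t_1 < \cdots < t_m = T$ the increments $Y_j \bydef X_{t_j} - X_{t_{j-1}}$, $j = 1, \dots, m$, are mutually independent with $Y_j \sim \Pois\big(\la (t_j - t_{j-1})\big)$. I would establish this by computing the joint probability generating function $\E\big[\prod_{j=1}^m z_j^{Y_j}\big]$ and checking that it factors into the corresponding product of Poisson generating functions.

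First I would condition on $N$. Given $N = n$, the hypothesis says $T_1, \dots, T_n$ are i.i.d.\ uniform on $[0,T]$; writing $I_j = (t_{j-1}, t_j]$ and $p_j = (t_j - t_{j-1})/T$, each $T_k$ lands in $I_j$ independently with probability $p_j$ (boundary points being hit with probability $0$), so that $Y_j = \#\{k \le n : T_k \in I_j\}$ makes $(Y_1, \dots, Y_m)$ conditionally $\operatorname{Multinomial}(n; p_1, \dots, p_m)$. Hence the conditional joint p.g.f.\ equals $\E\big[\prod_{j=1}^m z_j^{Y_j} \mid N = n\big] = \big(\sum_{j=1}^m p_j z_j\big)^n$.

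Next I would un-condition using $N \sim \Pois(\la T)$, whose p.g.f.\ is $s \mapsto e^{\la T (s - 1)}$:
\begin{equation*}
\E\Big[\prod_{j=1}^m z_j^{Y_j}\Big] = \E\Big[\Big(\sum_{j=1}^m p_j z_j\Big)^{N}\Big] = \exp\!\Big(\la T\Big(\sum_{j=1}^m p_j z_j - 1\Big)\Big) = \prod_{j=1}^m \exp\!\big(\la (t_j - t_{j-1})(z_j - 1)\big),
\end{equation*}
where the last equality uses $\sum_{j=1}^m p_j = 1$ and $\la T p_j = \la(t_j - t_{j-1})$. The right-hand side is exactly the joint p.g.f.\ of independent random variables $Y_j \sim \Pois\big(\la(t_j - t_{j-1})\big)$, so the factorization simultaneously delivers the mutual independence of the increments and their marginal Poisson laws. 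As the partition was arbitrary, this pins down all finite-dimensional distributions of $(X_t)_{0 \le t \le T}$, and the proposition follows.

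I do not expect a genuine obstacle in the computation itself --- the joint p.g.f.\ collapses cleanly and automatically yields mutual (not merely pairwise) independence. The points that need a little care are structural: (i) invoking the correct characterization, namely that independent increments with the stated Poisson marginals, together with the path regularity intrinsic to the random-sum representation, is equivalent to $X_t$ being a Poisson process of intensity $\la$ on $[0,T]$ in the sense defined earlier; and (ii) the measure-theoretic justification that the hypothesis $\Pp(T_k \le t \mid N) = t/T$ together with the stated independence does give, via regular conditional distributions, an i.i.d.-uniform family $T_1, \dots, T_N$ given $N$, which is what the multinomial step in the second paragraph rests on.
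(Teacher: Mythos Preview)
Your argument is correct and is, in essence, the same computation the paper carries out. Note that the paper does not prove Proposition~\ref{prop_charact} directly --- it cites \cite{Kre} --- but it does prove the Mixed Poisson generalization, Theorem~\ref{thm_pb1}, and your proof is precisely the Poisson specialization of that argument. The paper packages the conditioning-and-uncondition step into Lemmas~\ref{lem_TL} and~\ref{lem_TL_2d}, which assert the generating-function identity $\hat\pi(\vz;\vx) = \hat p\big(1 - \sum_j x_j(1 - z_j)\big)$; plugging in $\hat p(z) = e^{\la T(z-1)}$ and $x_j = p_j = (t_j - t_{j-1})/T$ yields exactly your factorization $\prod_j e^{\la(t_j - t_{j-1})(z_j - 1)}$. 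Your direct multinomial computation simply re-derives the content of Lemma~\ref{lem_TL_2d} in the concrete case, so there is no genuine methodological difference.
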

This result leads to the BS algorithm for the multivariate Poisson processes considered 
in \cite{KAY} and \cite{Kre}. In Section \ref{sec:BS}, we generalize Proposition \ref{prop_charact} for the class of MPPs, which can be obtained by a similar construction using the random variable $N \sim \MP(U)$. 

%
%
\subsection{Negative Binomial Process}
Let us now consider the Negative Binomial (NB) process, which is a MPP with the structure distribution $U$ being the gamma distribution. The Negative Binomial process is widely used for count data that exhibit overdispersion because, unlike the Poisson process, it does not have the restriction that its mean must equal its variance. For this reason, we use the Negative Binomial distribution in numerical experiments in Section \ref{subsect:sec_C_MPP_timestruct_of_corr} and Section \ref{subsect:fwd_time_struct_corr} to compare the processes generated by BS in the mixed Poisson versus the Poisson case.

The generating function of the Negative Binomial process is given in the following Lemma, the proof of which can be found in standard texts \cite{feller2008}.
\begin{lemma} The generating function, $G(t, z)= \E[ z^{X_t} ]$, of the 
Negative Binomial Process is
\begin{equation}
G(t, z) = (\frac{b}{b+t(1-z)})^r, \quad  \abs{z}\le 1, \,\, t\ge 0, \,\, r > 0.
\label{eq_nb2}
\end{equation}
\end{lemma}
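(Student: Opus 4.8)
The plan is to specialize the general MPP generating function \eqref{eq:mmp_moment_generating_fn} to the gamma structure distribution. First I would write the gamma density in rate parametrization, $\d U(x) = \frac{b^r}{\Gamma(r)}\,x^{r-1}e^{-bx}\,\d x$ for $x>0$ (so that $\bar\la = \E[\Lambda] = r/b$ and $\si^2(\la) = r/b^2$, consistent with the moment formulas above), and substitute it into \eqref{eq:mmp_moment_generating_fn}:
\begin{equation*}
G(t,z) = \int_0^\iy e^{xt(z-1)}\,\frac{b^r}{\Gamma(r)}\,x^{r-1}e^{-bx}\,\d x \;=\; \frac{b^r}{\Gamma(r)}\int_0^\iy x^{r-1}\,e^{-(b-t(z-1))x}\,\d x .
\end{equation*}

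The key step is then to recognize the remaining integral as a (complex) gamma integral. Before evaluating it I would check convergence: for $\abs{z}\le 1$ and $t\ge 0$ one has $\operatorname{Re}\big(t(z-1)\big) = t\big(\operatorname{Re}(z)-1\big)\le 0$, hence $\operatorname{Re}\big(b-t(z-1)\big)\ge b>0$, so the integrand is absolutely integrable. The substitution $y=(b-t(z-1))x$ — valid for arguments with positive real part, by analytic continuation from the positive-real case, where it is elementary — turns the integral into $\Gamma(r)\big(b-t(z-1)\big)^{-r}$. Plugging this back yields
\begin{equation*}
G(t,z) = b^r\big(b-t(z-1)\big)^{-r} = \(\frac{b}{b+t(1-z)}\)^{r},
\end{equation*}
which is exactly \eqref{eq_nb2}; the same estimate shows that the stated region $\abs{z}\le 1$, $t\ge 0$ sits inside the domain of convergence.

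There is no real obstacle here, since the identity is classical; the only care needed is bookkeeping — fixing the shape/rate parametrization of the gamma law consistently and confirming that the admissible set of $(t,z)$ lies in the half-plane $\operatorname{Re}\big(b-t(z-1)\big)>0$ on which the gamma integral converges, which the estimate above confirms. Alternatively, one can bypass the integral entirely: since a Poisson variable with a Gamma$(r,b)$-randomized intensity is Negative Binomial, $X_t$ is Negative Binomial with a time-dependent parameter, and the probability generating function \eqref{eq_nb2} can simply be read off from the known PGF of the Negative Binomial distribution — this is the route taken in standard references such as \cite{feller2008}.
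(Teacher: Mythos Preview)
Your proof is correct. The paper itself does not prove this lemma at all; it simply defers to standard texts (Feller), so there is no paper proof to compare against line by line. Your direct computation---specializing the MPP generating function \eqref{eq:mmp_moment_generating_fn} to the gamma structure density and evaluating the resulting gamma integral---is exactly the standard derivation one would expect, and your convergence check on $\operatorname{Re}\big(b-t(z-1)\big)>0$ is the right care to take. The alternative you mention at the end (recognize $X_t$ as Negative Binomial and read off its PGF) is precisely the route the paper gestures at by citing \cite{feller2008}, so you have covered both the computational and the ``look it up'' approaches.
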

\noindent where $b$ corresponds to the probability of success and $r$ corresponds to the number of failures until the process is stopped.

\begin{remark}
Notice that our process is \textit{not} a L\'evy process---the inter-arrival times are not independent but only conditionally independent.
\end{remark}

%
%

\section{Backward Simulation of Mixed Poisson Processes}
\label{sec:BS}

Backward Simulation of Poisson processes, studied in \cite{KAY} and \cite{Kre}, relies on the conditional uniformity of the arrival moments. BS requires sampling the corresponding joint distribution, at terminal time, to obtain a vector of the number of events for each coordinate in the simulation interval $[0,T]$. The dependency structure manifests itself in the joint distribution (Section \ref{sec:ejd} discusses how correlated joint distributions can be obtained), i.e., in the sampled vector of terminal events. Each coordinate is simulated independently by drawing the corresponding number of uniform variates, which are then ordered to give the arrival moments of events.

\subsection{Fundamental Results}
Let us now show that the distribution of the arrival moments, conditional on the number of events, is also uniform for MPPs. Moreover, we show that the process generated by BS remains a MPP. First, we introduce the following two lemmas, the proofs of which can be found in \cite{Kre}. To this end, we introduce some useful notation. Let $X_t$ be a MPP and consider the points $\{T_0, T_1, \dots,T_d \}$ where $0 \leq T_0 < T_1 < \cdots < T_d \leq T$.
Denote by $\D X_i\bydef X_{T_i} - X_{T_{i-1}}, \, i=1,2, \dots, \numdim$, non-overlapping intervals.
For a $\numdim$-dimensional vector\footnote{The $d$ that we use here for the dimension of a generic vector should not be confused with the dimension of a multivariate mixed Poisson process in Section \ref{sec:ejd} },
$\ \vk=(k_1, k_2, \dots, k_\numdim)\in\Z^\numdim_+$, with non-negative integer coordinates, $k_j\ge 0$, we denote
the norm of the vector by
$$\nvk = \sum_{j=1}^\numdim k_j. $$ 
For any $\numdim$-dimensional vector, $\vx=(x_1, x_2, \dots, x_\numdim)$, with non-negative coordinates,
and $\vk\in\Z^\numdim_+$, we denote
$$ {\vx}^{\vk} \bydef \prod_{j=1}^\numdim {x_j}^{k_j}. $$

The conditional probability of the number of events in each non-overlapping interval given the number of terminal events takes the form \cite{Grand}

\begin{equation}
\mathlarger{\Pp}\bigg(\D X_1=k_1, \dots, \D X_\numdim=k_\numdim \Vs X_T= l + \sum_{j=1}^\numdim k_j \bigg) = {\vk+l \choose \vk} \cdot
{\vp}^{\,\, \vk} \cdot q^l
\label{eq:cond_prob_events_part}
\end{equation}
with the multinomial coefficient
$$ {\vk+l\choose\vk}\bydef \frac{\( l + \sum_{i=1}^\numdim\limits k_i \){!}}{ l! \cdot \prod_{i=1}^\numdim\limits k_i!} $$
$p_j=(\Delta T_j/ T)  \in [0,1]$ for $j=1,2,\dots, \numdim$, $\vp=(p_1, \dots, p_\numdim)$ and $q=1-\sum_{j=1}^\numdim p_j$.

\begin{lemma}\label{lem_TL}
Consider a discrete random variable, $\xi$, taking non-negative integer values with probabilities,  
$p_k=\Pp(\xi=k), k=0,1,2,\dots$, and denote its generating function by
$\hat p(z)=\sum_{k=0}^\iy p_k z^k$, $\abs{z}\le 1$.
Consider a sequence
\begin{equation}
q_k(x) =\sum_{m=0}^\iy p_{k+m} {k+m\choose k} x^k (1-x)^m, \quad 0\le x\le 1, \quad k = 0,1,2,\dots
\label{eq_q_k}
\end{equation}
Then, for any fixed $x \in [0,1]$, the sequence $\{q_k(x)\}$ is a probability distribution and its generating function, $\hat q(z;x)$, is
$\hat q(z;x) = \hat p(1-x+xz)$.
\end{lemma}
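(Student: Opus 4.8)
The plan is to prove both assertions simultaneously by computing the generating function $\hat q(z;x)$ directly and then specializing to $z = 1$. First I would write
\begin{equation*}
\hat q(z;x) = \sum_{k=0}^\iy q_k(x)\, z^k = \sum_{k=0}^\iy \sum_{m=0}^\iy p_{k+m} \binom{k+m}{k} (xz)^k (1-x)^m,
\end{equation*}
absorbing $x^k$ and $z^k$ into $(xz)^k$. Since every summand is non-negative, Tonelli's theorem for series permits an arbitrary reordering; introducing the new index $n = k+m$ (so $m = n-k \ge 0$), summing over $n \ge k$ and then over $k$, and finally interchanging the two sums, gives
\begin{equation*}
\hat q(z;x) = \sum_{k=0}^\iy \sum_{n=k}^\iy p_n \binom{n}{k} (xz)^k (1-x)^{n-k} = \sum_{n=0}^\iy p_n \sum_{k=0}^{n} \binom{n}{k} (xz)^k (1-x)^{n-k} = \sum_{n=0}^\iy p_n \big( 1 - x + xz \big)^n,
\end{equation*}
the inner sum being the binomial expansion of $(xz + (1-x))^n$. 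The right-hand side is precisely $\hat p(1-x+xz)$; the manipulation is legitimate because, for $x \in [0,1]$ and $\abs{z}\le 1$, the point $1-x+xz$ is a convex combination of $1$ and $z$ and hence lies in the closed unit disk, so $\sum_n p_n (1-x+xz)^n$ converges absolutely and $\hat p$ is evaluated where it is defined.

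For the first claim, non-negativity of each $q_k(x)$ is immediate from $p_{k+m}\ge 0$, $\binom{k+m}{k}\ge 0$, $x^k\ge 0$ and $(1-x)^m\ge 0$. That the $q_k(x)$ sum to one then follows by putting $z=1$ in the identity just derived: $\sum_{k\ge 0} q_k(x) = \hat q(1;x) = \hat p(1-x+x) = \hat p(1) = \sum_{k\ge 0} p_k = 1$, where the rearrangement of the double sum at $z=1$ is again justified purely by non-negativity, so no circularity is introduced. Hence $\{q_k(x)\}_{k\ge 0}$ is a genuine probability distribution with generating function $\hat q(z;x) = \hat p(1-x+xz)$.

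I expect this to be essentially a bookkeeping exercise: the single point deserving a word of care is the interchange of the order of summation in the double series, which I would dispatch by appealing to non-negativity of the summands (Tonelli), together with the observation that the argument $1-x+xz$ never leaves the unit disk, so that all generating functions appear only inside their domains of convergence. For intuition it is worth remarking that $\{q_k(x)\}$ is exactly the law of the binomial thinning of $\xi$ in which each of the $\xi$ units is retained independently with probability $x$; conditioning on $\xi$ and using $\E\big[ z^{\mathrm{Bin}(n,x)} \big] = (1-x+xz)^n$ reproduces the identity at once, but the direct computation above is what I would record as the formal proof.
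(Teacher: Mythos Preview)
Your argument is correct. The paper itself does not actually supply a proof of this lemma; it introduces Lemmas~\ref{lem_TL} and~\ref{lem_TL_2d} with the remark that ``the proofs of which can be found in \cite{Kre}'' and then simply uses them in the proof of Theorem~\ref{thm_pb1}. So there is no in-paper proof to compare against. That said, your computation is exactly the standard one: substitute $n=k+m$, interchange sums by non-negativity, and collapse the inner sum via the binomial theorem to recognise $\hat p(1-x+xz)$; setting $z=1$ gives the normalisation. The side remark that $1-x+xz$ stays in the closed unit disk for $\abs{z}\le 1$, together with the binomial-thinning interpretation, are both apt and would match what one expects the referenced proof in \cite{Kre} to contain.
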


\begin{lemma}\label{lem_TL_2d}
Consider a discrete random variable, $\xi$, taking non-negative integer values with probabilities, $p_k=\Pp (\xi=k), k=0,1,2,\dots$,
and denote its generating function by $\hat p(z)=\sum_{k=0}^\iy p_k z^k$, $\abs{z}\le 1$. Let $\vk\in\Z^\numdim_+$ and consider the function 
$\pi: \Z^\numdim_+\to\R$ defined by, 
\begin{equation}
    \pi({\vk};\vx) = \sum_{l=0}^\iy p_{\nvk+l} {\vk+l\choose\vk} \cdot {\vx}^{\,\vk}\cdot y^l, \label{eq_q_k_m}
\end{equation}
where $\vx = (x_1,\dots,x_d)$, $x_j \geq 0$, $\sum^d_{j=1} x_j < 1$ and $y = 1 - \sum_{j=1}^\numdim x_j$. Denote by $\hat \pi(\vz; \vx)$ the generating function
$$ \hat \pi(\vz;\vx)\bydef \sum_{\vk\in\Z^\numdim_+} \pi({\vk};\vx) {\vz}^{\,\vk}, $$
where $\ \vz=(z_1, z_2, \dots, z_\numdim)$ and $\max \{ |z_1|,\dots,|z_\numdim| \} \leq 1$, then
\begin{equation}
   \hat \pi(\vz;\vx) = \hat p (1-  \sum_{j=1}^\numdim x_j (1-z_j) ). \label{eq_mgf_q_2d}
\end{equation}
\end{lemma}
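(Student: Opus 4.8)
The plan is to prove \eqref{eq_mgf_q_2d} by a direct rearrangement of the defining power series, in the spirit of the one-dimensional Lemma~\ref{lem_TL}. First I would substitute the definition \eqref{eq_q_k_m} of $\pi(\vk;\vx)$ into $\hat\pi$ and write
\[
\hat\pi(\vz;\vx)=\sum_{\vk\in\Z^\numdim_+}\ \sum_{l=0}^\iy p_{\nvk+l}\,{\vk+l\choose\vk}\, y^{\,l}\prod_{j=1}^\numdim (x_j z_j)^{k_j}.
\]
When $z_1,\dots,z_\numdim\in[0,1]$ this is a series of nonnegative terms, so Tonelli's theorem permits any reordering; the general case $\max_j|z_j|\le 1$ follows afterwards by bounding each term by its modulus and invoking the nonnegative case for convergence.

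Next I would reindex by the total count $n\bydef\nvk+l$, so that $l=n-\nvk$ and the multinomial coefficient ${\vk+l\choose\vk}$ becomes $\frac{n!}{(n-\nvk)!\,\prod_{j=1}^\numdim k_j!}$, and then group the terms (each group being a finite sum) according to the value of $n$:
\[
\hat\pi(\vz;\vx)=\sum_{n=0}^\iy p_n\ \sum_{\substack{\vk\in\Z^\numdim_+\\ \nvk\le n}}\frac{n!}{(n-\nvk)!\,\prod_{j=1}^\numdim k_j!}\ y^{\,n-\nvk}\prod_{j=1}^\numdim (x_j z_j)^{k_j}.
\]
The crucial observation is that, on setting $k_0\bydef n-\nvk$, the inner sum runs over all $(\numdim+1)$-tuples $(k_0,k_1,\dots,k_\numdim)$ of nonnegative integers with $k_0+\cdots+k_\numdim=n$, so it is exactly the multinomial expansion of $\bigl(y+\sum_{j=1}^\numdim x_jz_j\bigr)^{n}$.

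It follows that $\hat\pi(\vz;\vx)=\sum_{n=0}^\iy p_n\bigl(y+\sum_{j=1}^\numdim x_jz_j\bigr)^{n}=\hat p\bigl(y+\sum_{j=1}^\numdim x_jz_j\bigr)$, and since $y=1-\sum_j x_j$ the argument equals $1-\sum_{j=1}^\numdim x_j(1-z_j)$, which is precisely \eqref{eq_mgf_q_2d}. I would also record that this point lies in the closed unit disc, where $\hat p$ is defined: because $x_j\ge 0$, $\sum_j x_j<1$ and $|z_j|\le 1$,
\[
\Bigl|\,1-\sum_{j=1}^\numdim x_j(1-z_j)\,\Bigr|\le y+\sum_{j=1}^\numdim x_j|z_j|\le y+\sum_{j=1}^\numdim x_j=1 .
\]
Taking $\vz=(1,\dots,1)$ gives $\hat\pi(\mathbf 1;\vx)=\hat p(1)=1$, which at the same time confirms that $\{\pi(\vk;\vx)\}_{\vk\in\Z^\numdim_+}$ is a genuine probability distribution.

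The only real obstacle is the justification of the interchange of the nested summations, which is why I would dispose of the nonnegative case $z_j\in[0,1]$ first; the rest is just the multinomial theorem. An alternative, more bookkeeping-heavy route would be induction on $\numdim$, peeling off one coordinate at a time via Lemma~\ref{lem_TL} while renormalising the remaining weights at each step, but the direct computation above seems cleanest.
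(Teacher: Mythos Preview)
Your argument is correct: substitute, reindex by the total $n=\nvk+l$, and recognise the multinomial expansion of $(y+\sum_j x_jz_j)^n$; the Tonelli/Fubini justification and the check that the argument of $\hat p$ stays in the closed unit disc are both fine. Note, however, that the paper does not actually prove Lemma~\ref{lem_TL_2d} itself---it states the result and refers the reader to \cite{Kre} for the proof---so there is no in-paper argument to compare against. Your direct multinomial computation is precisely the ``vector analogue of Lemma~\ref{lem_TL}'' that the paper advertises, and is the expected route; the inductive alternative you sketch would also work but is, as you say, heavier on bookkeeping.
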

With Lemma \ref{lem_TL_2d}, the vector analogue of Lemma \ref{lem_TL}, we can prove the main theorem of this section.
	
\begin{theorem}\label{thm_pb1}
    Let the process $X_t$ be represented as a random sum
\begin{equation*}
X_t = \sum^{N}_{k=1} \One (T_k< t)
\end{equation*}
where the number of random events $N \sim\MP(U)$ and $\{T_k\}_{k=1}^N$ are independent, identically distributed random variables having a uniform conditional distribution in the interval $[0,T]$, then $X_t$ is $ \MPP(U)$  in the interval $[0, T]$.
\end{theorem}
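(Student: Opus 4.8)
The plan is to identify the law of the constructed process $X_t$ with that of a member of $\MPP(U)$ by matching finite-dimensional distributions on $[0,T]$. Recall that a mixed Poisson process is, conditionally on its random intensity $\Lambda\sim U$, an ordinary Poisson process, and that by Lundberg's result (quoted in Section~\ref{sec:MPP}) such a process is uniquely determined by $U$. Hence it suffices to fix an arbitrary grid $0=T_0<T_1<\dots<T_\numdim\le T$ and show that the joint probability generating function of the increments $\D X_1,\dots,\D X_\numdim$ equals
$$
\int_0^\iy \prod_{j=1}^\numdim e^{x\,\D{T_j}(z_j-1)}\,\d U(x),
$$
which is the corresponding object for a $\MPP(U)$ (condition on $\Lambda=x$, where the increments are independent with a $\Pois(x\,\D{T_j})$ law, then integrate against $U$).

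First I would note that, since each $T_k$ is uniform on $[0,T]$, $\Pp(T_k\in\{0,T\})=0$, so almost surely $X_0=0$ and $X_T=N$; in particular $X_T\sim\MP(U)$, and I write $p_m:=\Pp(X_T=m)$ and $\hat p(z):=\E[z^{X_T}]=\int_0^\iy e^{xT(z-1)}\,\d U(x)$, the last equality being (\ref{eq:mmp_moment_generating_fn}). Next, conditionally on $N=n$, the points $T_1,\dots,T_n$ are i.i.d.\ uniform on $[0,T]$, so the counts in the disjoint subintervals $(T_{j-1},T_j]$ together with the count in the remainder of $[0,T]$ are multinomially distributed; this is precisely (\ref{eq:cond_prob_events_part}) with $p_j=\D{T_j}/T$ and $q=1-\sum_{j=1}^\numdim p_j$. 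Summing over the number $l$ of points lying outside every $(T_{j-1},T_j]$ and using $X_T=N$ then gives
$$
\Pp\big(\D X_1=k_1,\dots,\D X_\numdim=k_\numdim\big)=\sum_{l=0}^\iy p_{\nvk+l}\binom{\vk+l}{\vk}{\vp}^{\,\vk}q^l=\pi(\vk;\vp),
$$
in the notation of Lemma~\ref{lem_TL_2d} (with $\vx=\vp$ and $y=q$).

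Applying Lemma~\ref{lem_TL_2d} yields the joint generating function of the increments,
$$
\E\Big[\prod_{j=1}^\numdim z_j^{\D X_j}\Big]=\hat\pi(\vz;\vp)=\hat p\Big(1-\sum_{j=1}^\numdim p_j(1-z_j)\Big),
$$
and substituting $\hat p$ together with $p_j=\D{T_j}/T$ gives
$$
\hat p\Big(1-\sum_{j=1}^\numdim \tfrac{\D{T_j}}{T}(1-z_j)\Big)=\int_0^\iy \exp\Big(-x\sum_{j=1}^\numdim \D{T_j}(1-z_j)\Big)\d U(x)=\int_0^\iy \prod_{j=1}^\numdim e^{x\,\D{T_j}(z_j-1)}\,\d U(x),
$$
which is the target expression from the first paragraph. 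Since the grid was arbitrary, all finite-dimensional distributions of $X_t$ over $[0,T]$ coincide with those of the mixed Poisson process with structure distribution $U$, so $X_t\in\MPP(U)$ on $[0,T]$.

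The computation is essentially bookkeeping once Lemma~\ref{lem_TL_2d} is available, so I do not expect a genuine obstacle; the point requiring the most care is the reduction itself, namely arguing cleanly that matching the increment generating functions over every finite grid is equivalent to matching finite-dimensional distributions, and that the latter (via the conditionally-Poisson representation and Lundberg's uniqueness) pins down membership in $\MPP(U)$. A minor detail is the degenerate case $T_\numdim=T$, where $q=0$ and the leftover term simply drops out (with the convention $0^0=1$), so that (\ref{eq:cond_prob_events_part}) still applies verbatim.
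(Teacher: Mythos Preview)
Your proposal is correct and follows essentially the same approach as the paper: both compute the joint probability generating function of the increments via Lemma~\ref{lem_TL_2d} and identify it with that of a $\MPP(U)$. The only difference is organizational---the paper splits the argument into two steps (first the one-dimensional marginals via Lemma~\ref{lem_TL}, then conditional independence of increments via Lemma~\ref{lem_TL_2d}), whereas you go directly to matching all finite-dimensional distributions in one pass with Lemma~\ref{lem_TL_2d}; your packaging is slightly more streamlined since the univariate case is subsumed, but the substance is the same.
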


\begin{proof}

We prove the following two statements.
\begin{enumerate}
  \item At any time $t\in [0, T]$, the generating function of $X_t$ is $\E[z^{X_t}]=
  \int_0^\iy e^{xt(z-1)}\d U(x).$
  \item The increments of the process $X_t$ over disjoint intervals are conditionally independent random variables.
\end{enumerate}
The theorem follows immediately from the two results above. Let us prove the first statement. As noted in (\ref{eq:mmp_moment_generating_fn}), the  generating function of $X_T$ is
$$ \E[ z^{X_T}] = \int_0^\iy e^{xT(z-1)}\d U(x). $$
The probabilities $p_k(t) \bydef\Pp(X_t=k), \, k \in \mathbb{Z}_+, \,\, 0\le t\le T$,
satisfy
\begin{equation}
p_k(t) = \sum_{l=0}^\iy p_{k+l}(T)\, {k+l\choose k} \(\frac{t}T\)^k \cdot \(1-\frac{t}T\)^{l}, \quad k=0,1,2,\dots
\label{eq_rec_prob}
\end{equation}
In our case, $\hat p(z)=\int_0^\iy e^{xT(z-1)} \d U(x)$. Thus, the probabilities
$q_k=p_k(t) := P(X_t = k \, | \, X_T) $.
Taking $x=t T^{-1}$ in (\ref{eq_q_k}), we obtain from (\ref{eq_rec_prob}) and Lemma~\ref{lem_TL} that
$$\hat q(z)=\int_0^\iy e^{xt(z-1)} \d U(x) $$
as was to be proved.

The second statement is proved using Lemma~\ref{lem_TL_2d}. 
Let $\vx=(x_1, x_2, \dots, x_\numdim)$, satisfying the conditions listed in the statement of the lemma.

Then from (\ref{eq:cond_prob_events_part}), we have 
\begin{align}
\Pp \, (\D X_1=k_1, & \dots, \D X_\numdim=k_\numdim ) \nonumber \\ 
= & \sum_{l=0}^\iy {\vk+l \choose \vk} \cdot {\vp}^{\,\,\vk} \cdot q^l \cdot \mathbb{P}\big(\, X_T= l + \sum_{j=1}^\numdim k_j \big).
\label{eq_cjp}
\end{align}
  Now consider the generating function
$$\pi(\vz)\bydef \E\Bigl[ \prod_{j=1}^\numdim z_j^{\D X_j}\Bigr], \quad \abs{z_j}\le 1, \,\,j=1,2,\dots,\numdim. $$
Applying Lemma~\ref{lem_TL_2d} with 
$\hat p(z)\bydef \E[z^{X_T}]=\int_0^\iy e^{\la T(z-1)}\d U(\la)$, $x_j=p_j$ and $y=q$, we obtain 
$$ \pi(\vz) = \int_0^\iy \prod_{j=1}^\numdim e^{\la \t_j(z_j-1)} \d U(\la). $$
The latter relation implies that the increments of $X_t$ are conditionally independent, as was to be proved.
\end{proof}

Theorem~\ref{thm_pb1} is intuitively appealing. Indeed, $X_t$ is a Poisson process with random intensity, 
$\la$, which is determined at time $t=0$ and, therefore, measurable with respect to the 
filtration $\{\mathcal{F}_t\}_{t\ge 0}$ generated by the process $X_t$. 
The conditional distribution of the arrival moments is uniform in the interval $[0, T]$ and does not depend on the parameters of the process. 

Algorithm \ref{alg:bsmmpp} describes the Backward Simulation of multivariate MPPs (MMPPs) in detail.
A correlated multivariate mixed Poisson process, $\textbf{N}_t$, has as its marginals MPPs. 
Since the marginals are correlated and the joint distribution does not factorize, a joint distribution 
that has the desired correlation structure with the marginalized distributions satisfying the constraints 
of the given marginals is needed (Step \ref{alg1_ejd} of Algorithm \ref{alg:bsmmpp}). This is discussed 
in Section \ref{sec:ejd}. Given a vector of the counts of the number of events from the joint distribution, 
Theorem \ref{thm_pb1} applies to each marginal distribution independently.

\begin{algorithm}[H]
\label{alg:bsmmpp}
\caption{Backward Simulation of Correlated  MMPPs}
\KwRequires{Multivariate mixed Poisson distribution at terminal time $\boldsymbol{\MP(U)}=(\MP(U^{(1)}),\dots,\MP(U^{(d)}))$}
\KwOutput{Scenarios of the multivariate mixed Poisson process}
    

Generate $\textbf{N} = (N^{(1)},\dots,N^{(d)})$ where $\textbf{N}\sim \boldsymbol{\MP(U)}$ \label{alg1_ejd}

\For{each $j$} 
{ \tcp{this can be done in parallel}
Generate $N^{(j)}$ uniform random variables $\textbf{T}^{(j)} = (T_1^{(j)},\dots,T_{N^{(j)}}^{(j)})$

Sort $\textbf{T}^{(j)}$ in ascending order
}
\Return{$\textbf{T}=(\textbf{T}^{(1)},\dots,\textbf{T}^{(d)})$}
\end{algorithm}

%
%
\subsection{Time Structure of Correlations}
\label{subsect:sec_C_MPP_timestruct_of_corr}

Let us now analyze the time structure of the correlations of multivariate MPPs generated by BS. Since correlations are inherently pairwise in nature, the analysis carried out in the bivariate setting corresponds to pairs of variables in the multivariate setting. 

\begin{theorem}[Time Structure of the Correlation Coefficient]
\label{thm_pb2}
Consider a bivariate process $(X_t, Y_t)$ such that $X_t$ and $Y_t$ possess the conditional uniformity 
property. The sample paths of the processes $X_t$ and $Y_t$ are generated by BS in the interval $[0,T]$. 
Let the correlation coefficient at time $T$, $\rho(T) \bydef \corr(X_T, Y_T)$ be known. 
Then $\rho(t) = \corr(X_t, Y_t)$ takes the form 
\begin{equation}
\rho(t) = \rho(T) \cdot\frac{Z(T) }{Z(t)}, \quad 0< t\le T,
\label{eq:time_struc_corr_bs}
\end{equation}
where 
$$
Z(t)=\frac{ \si(X_t)\,\si(Y_t)}{t^2}, \quad t>0,
$$
and $\si^2(X_t)$ denotes the variance of $X_t$. 
\end{theorem}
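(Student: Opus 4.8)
The plan is to reduce everything to the defining mechanics of Backward Simulation. In BS the pair of terminal counts $(X_T, Y_T)$ is drawn from the joint distribution, and then each coordinate's arrival moments are obtained by sorting an independent sample of uniform variates on $[0,T]$. Consequently, for $0 < t \le T$, writing $M \bydef X_T$, $N \bydef Y_T$ and $p \bydef t/T$, the count $X_t$ equals the number of the $M$ uniform points of the first coordinate falling in $[0,t]$, and $Y_t$ the number of the $N$ uniform points of the second coordinate falling in $[0,t]$. Hence, conditionally on $(M,N)$, we have $X_t \sim \mathrm{Bin}(M,p)$ and $Y_t \sim \mathrm{Bin}(N,p)$, and --- crucially --- $X_t$ and $Y_t$ are conditionally independent given $(M,N)$, since in Algorithm~\ref{alg:bsmmpp} the uniform draws for the two coordinates are generated independently. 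The first step is simply to record these facts, together with $\E[X_t\mid M,N] = pM$ and $\E[Y_t\mid M,N] = pN$.

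The second step computes the covariance by conditioning on $(M,N)$:
\[
\cov(X_t, Y_t) = \E\bigl[\cov(X_t, Y_t \mid M,N)\bigr] + \cov\bigl(\E[X_t\mid M,N],\, \E[Y_t\mid M,N]\bigr).
\]
The first term vanishes by conditional independence, and the second is $\cov(pM, pN) = p^2\cov(X_T,Y_T) = (t/T)^2\cov(X_T,Y_T)$. (Equivalently, $\E[X_tY_t] = \E\bigl[\E[X_t\mid M,N]\,\E[Y_t\mid M,N]\bigr] = p^2\,\E[MN]$ while $\E[X_t]\,\E[Y_t] = p^2\,\E[M]\,\E[N]$.)

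Finally I would divide by $\si(X_t)\si(Y_t)$, substitute $\cov(X_T,Y_T) = \rho(T)\,\si(X_T)\,\si(Y_T)$, and rearrange:
\[
\rho(t) = \frac{(t/T)^2\,\si(X_T)\,\si(Y_T)}{\si(X_t)\,\si(Y_t)}\,\rho(T)
= \rho(T)\cdot\frac{\si(X_T)\si(Y_T)/T^2}{\si(X_t)\si(Y_t)/t^2}
= \rho(T)\cdot\frac{Z(T)}{Z(t)},
\]
which is (\ref{eq:time_struc_corr_bs}). I do not anticipate a genuine obstacle here; the point requiring the most care is the justification of the conditional independence of $X_t$ and $Y_t$ given $(X_T,Y_T)$, which rests specifically on the BS construction --- the two coordinate-wise families of uniforms are independent --- rather than being automatic for an arbitrary bivariate process with uniform conditional arrival moments. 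One should also note that $\si(X_t), \si(Y_t) > 0$ for $0 < t \le T$, so that $\rho(t)$ and $Z(t)$ are well defined.
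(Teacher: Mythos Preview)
Your argument is correct. Both your proof and the paper's rest on the same underlying observation: conditional on $(X_T,Y_T)=(k,l)$, the counts $X_t$ and $Y_t$ are independent binomials with parameters $(k,t/T)$ and $(l,t/T)$, which follows from the coordinate-wise independence of the uniform draws in the BS construction. The paper packages this as a functional equation for the bivariate generating function,
\[
\hat g(t,z,w)=\hat g\bigl(T,\,1-tT^{-1}+ztT^{-1},\,1-tT^{-1}+wtT^{-1}\bigr),
\]
and then differentiates twice to obtain $\cov(X_t,Y_t)=(t/T)^2\cov(X_T,Y_T)$; you instead reach that same covariance identity in one line via the law of total covariance. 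Your route is more elementary and gets straight to the point; the paper's generating-function identity is a slightly stronger intermediate statement (it determines the full joint law of $(X_t,Y_t)$ in terms of that of $(X_T,Y_T)$, not just the covariance), though only the covariance is used here. Your remark that the conditional independence hinges specifically on the BS construction is well taken and matches the paper's implicit assumption.
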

\begin{proof}

First, we show that the generating function of the process, 
$$\hat g(t, z, w)\bydef 
\E[z^{X_t} w^{Y_t}], \quad \abs{z}\le 1, \abs{w}\le 1
$$
satisfies the equation
\begin{equation}
\hat g (t, z, w)=\hat g (T, 1 -tT^{-1} + ztT^{-1}, 1 -tT^{-1} + wtT^{-1}).
\label{eq_gf_XY}
\end{equation}
To this end, note that for $0 \leq m \leq k$ and $0 \leq n \leq l$,
\begin{align}
\Pp(X_t=m, \, & Y_t =n \vs X_T=k, Y_T=l) = \\
& {k\choose m} \(\frac{t}{T}\)^m \(1-\frac{t}{T}\)^{k-m} {l\choose n}  
\(\frac{t}{T}\)^n \(1-\frac{t}{T}\)^{l-n} \nonumber
\end{align}
since at the end of the simulation interval $T$ there are $k$ events in total for $X_T$ 
and $l$ events 
in total for $Y_T$, the probability of the number of events $m$ and $n$ by a certain 
time $t$ can be viewed 
as a Bernoulli trial with probability of success $(t/T)$. Taking expectation, we obtain
\begin{align*}
\E[ z^{X_t} w^{Y_t} \vs & (X_T = k, Y_T = l)] \\
 &=\sum_{m=0}^k\sum_{n=0}^l z^m w^n \Pp\(X_t=m, Y_t=n\vs X_T=k, Y_T=l\) \\
& =\(1-\frac{t}T + z\frac{t}T\)^k \(1-\frac{t}T + w\frac{t}T\)^l.
\end{align*}

Denote $P_{kl} = \Pp\(  X_T=k, Y_T=l\)$. Then we find
\begin{eqnarray*} 
\hat g(t, z, w)&=&\E\Bigl[ \E[ z^{X_t} w^{Y_t}\vs (X_T, Y_T)] \Bigr] \\
 &=& \sum_{k\ge 0}\sum_{l\ge 0} P_{kl}  \(1-\frac{t}T + z\frac{t}T\)^k 
 \(1-\frac{t}T + w\frac{t}T\)^l \\
&=& \hat g\(T, 1 -tT^{-1} + ztT^{-1}, 1 -tT^{-1} + wtT^{-1}\).
\end{eqnarray*}  
Equation~(\ref{eq_gf_XY}) is derived. Differentiating $ \hat g(t, z, w)$ twice, we find 
\begin{equation}
	\cov(X_t, Y_t) =\frac{t^2}{ T^{2}} \cov(X_T, Y_T) \label{eq:covariance_t_T}
\end{equation}
from which we obtain
\begin{eqnarray*}
\rho(t) &=& \frac{\cov(X_t, Y_t)}{\si(X_t)\si(Y_t)} \\
&=& \frac{t^2}{ T^{2}}\cdot \frac{ \cov(X_T, Y_T) }{\si(X_t)\si(Y_t)} \\
&=& \frac{t^2}{ T^{2}}\cdot \frac{ \cov(X_T, Y_T) }{\si(X_T)\si(Y_T)} 
                             \cdot\frac{\si(X_T)\si(Y_T)}{\si(X_t)\si(Y_t) } \\
&=& \rho(T) \frac{t^2}{ T^{2}}\cdot \frac{\si(X_T)\si(Y_T)}{\si(X_t)\si(Y_t) }  \\
&=& \rho(T)   \cdot\frac{Z(T) }{Z(t)}.
\end{eqnarray*}
Theorem~\ref{thm_pb2} is thus proved.
\end{proof}

In the Poisson case, the auxiliary function $Z(T)/Z(t)$ in (\ref{eq:time_struc_corr_bs}) reduces to
$tT^{-1}$. Thus, the correlation structure is linear in time in the simulation interval $[0,T]$. 
This is not true in general for MPPs. For example, for the Negative Binomial processes, 
the auxiliary functions take the form 
\begin{align*}
\rho(t) & = \rho(T) \cdot \frac{Z(T)}{Z(t)} \\
& = \rho(T) \cdot \frac{t}{T} \cdot \sqrt{\frac{(\bar{\lambda}_X 
+ \sigma^2(\lambda_X)T)(\bar{\lambda}_Y
+ \sigma^2(\lambda_Y)T)}{(\bar{\lambda}_X + \sigma^2(\lambda_X)t)(\bar{\lambda}_Y 
+ \sigma^2(\lambda_Y)t)}}.
\end{align*}
The graph of the correlation function is presented in Figure \ref{fig:nb_bkwkdsim}, where the good agreement of the theoretical and empirical results can be seen. 

\begin{figure}[h!]
\centering
\includegraphics[width=\linewidth]{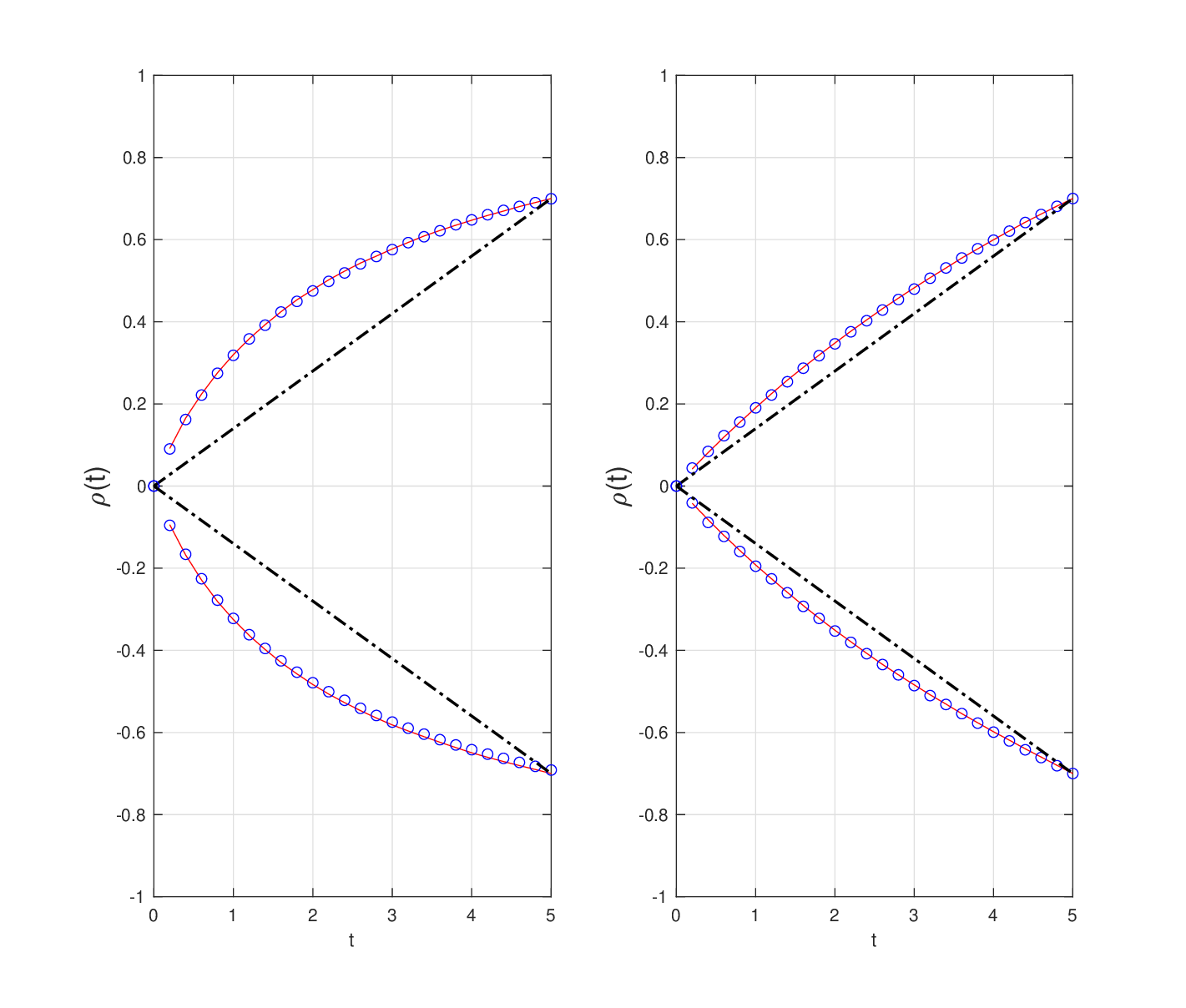}
\caption{Depicted by the red line and the blue circles are the dynamic correlation structures of two bivariate Negative Binomial (NB) processes. The red line represents the theoretical correlation structure as described in Theorem \ref{thm_pb2}. The blue circles represent the correlation structure recovered by Monte Carlo simulation of bivariate NB processes by BS. In the left figure, the first process has mean 3 and variance 1, while the second process has mean 5 and variance 30. In the right figure, the first process has mean 3 and variance 1, while the second process has mean 30 and variance 5. In both figures, the bivariate NB processes are calibrated to a positive correlation coefficient of $\rho(T)=0.7$ and a negative correlation coefficient of $\rho(T)=-0.7$. The dotted black line represents a bivariate Poisson process with mean parameters 3 and 30 simulated via BS. The bivariate Poisson processes are also calibrated to a positive correlation coefficient of $\rho(T ) = 0.7$ and a negative correlation coefficient of $\rho(T) = -0.7$.}
\label{fig:nb_bkwkdsim}
\end{figure}

%
%
\subsection{Forward vs Backward Simulation and their Correlation Boundaries}
\label{subsec:fwd_vs_bkwd_corr_bound}

Given a simulation interval $[0,T]$, stochastic processes are usually simulated forwards in time. 
This is due to the fact that it is conceptually natural and technically simpler to do so. 
However, it is not always the most suitable choice. This can be seen in the Forward Simulation (FS) 
of a correlated bivariate Poisson process $(N_t^{(1)},N_t^{(2)})$ where $N_t^{(i)} \sim \text{Poiss}(\mu_i)$ 
and $\{\Delta T_k^{(i)}\}_{k \geq 0}$ denotes the sequence of inter-arrival times for process $i\in\{1,2\}$. 
Forward Simulation of counting processes like Poisson processes consists of repeatedly simulating 
the inter-arrival times $\{\Delta T_k^{(i)}\}_{k \geq 0}$ while $\sum_k \Delta T^{(i)}_k \leq T$. 
The sequence of inter-arrival times represents a sample path of the counting process. In the Poisson case, 
the inter-arrival times are exponentially distributed, $\mathbb{P} (\Delta T_k^{(i)} \leq t) = 1 - e^{-\mu_i t}$. 
We must rely on the Fr\'{e}chet-Hoeffding Theorem\footnote{For discrete distributions, Fr\'{e}chet-Hoeffding is 
equivalent to the EJD theorem 
in 2-dimensions \cite{Kre}.} in \cite{Frechet,Hoeffd} to induce dependence between the marginal distributions 
of the inter-arrival times in the case of FS, which gives us the relations
\begin{equation}
\mu_1 \Delta T_k^{\,(1)} = \mu_2 \Delta T_k^{\,(2)} \, , \quad k = 1,2,\dots
\label{eq:frechet_hoefdding_max}
\end{equation}
to obtain extremal positive dependence between the distributions of the inter-arrival times and
\begin{equation}
\exp\,(-\mu_1 \cdot \Delta T_k^{\,(1)}) + \exp\,(-\mu_2 \cdot \Delta T_k^{\,(2)}) = 1
\label{eq:frechet_hoefdding_min}
\end{equation}
to obtain extremal negative dependence. 

We claim that the relations (\ref{eq:frechet_hoefdding_max}) and (\ref{eq:frechet_hoefdding_min}) lead to 
extreme correlations of the process under FS. In the case of extremal positive 
dependence, (\ref{eq:frechet_hoefdding_max}) implies that 
\begin{equation}
    \mu_1 T_k^{\,(1)} = \mu_2 T_k^{\,(2)} \, , \quad k = 1,2,\dots
    \label{eq:fwd_sim_increment_fh}
\end{equation}
Define $\kappa = \mu_1 / \mu_2$. Obviously,   $0 < \kappa < \infty$. 
We show that for all $t > 0,$
\begin{equation}
    N^{(1)}_t = N^{(2)}_{\kappa t}.
    \label{eq:counting_proc_FS_relation_max_dep}
\end{equation}
Suppose that $N^{(1)}_t = m$ for some $t > 0$ where $m$ is an integer. The arrival moments for $N^{(1)}_t$ satisfy the inequality
\begin{equation*}
    T_m^{(1)} \leq t < T_{m+1}^{(1)}.
\end{equation*}
It follows immediately from (\ref{eq:fwd_sim_increment_fh}) that, the arrival moments for $N^{(2)}_t$ must satisfy 
\begin{equation*}
    T^{\,(2)}_m = \kappa \, T_m^{\,(1)} \quad \text{for all \,} m=1,2,\dots
\end{equation*}
This implies that $T^{(2)}_m \leq \kappa t < T^{(2)}_{m+1}$ which in turns implies that $N^{(2)}_{\kappa t} = m$. 
Thus we have shown (\ref{eq:counting_proc_FS_relation_max_dep}) since $m$ is arbitrary. 
Now let us compute the correlation coefficient of such a process in the case $\kappa > 1$. $N^{(i)}_{\kappa t}$ can then be written as
\begin{equation*}
    N^{(i)}_{\kappa t} = N^{(i)}_t + \Delta N^{(i)}_{\kappa t}
\end{equation*}
where $i={1,2}$ and $\Delta N^{(i)}_{\kappa t}$ represents the increment of the $i$th process in the interval $[t,\kappa t]$ 
and is independent of $N_t^{(1)}$. Then we obtain
\begin{align*}
    \E [ N^{(1)}_t N^{(2)}_t ] & = \E [ N^{(2)}_{\kappa t} \cdot N^{(2)}_t  ] \\
    & = \E [ ( N^{(2)}_t )^2 ] + \E [ N^{(2)}_t \Delta N^{(2)}_{\kappa t} ]
\end{align*}
and 
\begin{equation*}
    \cov(N_t^{(1)},N_t^{(2)}) = \sigma^2 (N_t^{(2)}).
\end{equation*}
The latter relation  implies that
\begin{equation*}
    \rho( N_t^{(1)} , N_t^{(2)} ) = \frac{1}{\sqrt{\kappa}} \, , \qquad \textnormal{where} \,\, \kappa \geq 1.
\end{equation*}
Similar reasoning in the case $0 < \kappa < 1$ leads to
\begin{equation*}
    \rho( N_t^{(1)} , N_t^{(2)} ) = \sqrt{\kappa}.
\end{equation*}

This allows us to compare the correlations obtained from the FS case to the correlations obtained in the BS case. From the last two equations above, we can see that the notion of extreme dependence obtained via the Fr\'{e}chet-Hoeffding theorem results in a correlation coefficient that is a function of the intensities. This is very restrictive and precludes the possibility of calibrating to data. In contrast, the BS approach allows for the construction of processes with \textit{any} desired correlation that is within the range of admissible correlations. Further comparisons of the Forward vs the Backward approaches can be found in \cite{Chiu,Kre}.

%
%

\section{Backward Simulation and Extreme Joint Distributions}
\label{sec:ejd}

We showed in Section \ref{sec:BS} that the conditional uniformity property holds for the class of Mixed Poisson 
processes and that the process resulting from Backward Simulation with the number of events at terminal simulation 
time $T$ generated by a MPD is indeed a MPP. Backward Simulation for the class of Mixed Poisson processes relies on 
the knowledge of the joint MPD at terminal time $T$, but how do we construct a multivariate MPD with some desired 
dependency structure in the first place? In this section, we briefly review the work in \cite{akm} and \cite{Kre} in order for this paper to be self-contained. Moreover, some details are explained more clearly here than in \cite{akm} and \cite{Kre}. We address the general problem of constructing multivariate joint distributions from given marginal distributions 
such that the linear correlation coefficient between the marginals are equal to some desired correlations. We also 
discuss how to sample from such multivariate joint distributions. 

%
%
\subsection{The Bivariate Case}
We begin by describing the main ideas in 2-dimensions to build some intuition before presenting the general $d$-dimensional case. To that end, suppose we have a discrete bivariate distribution $P$ with marginals $Q^{(1)}$ and $Q^{(2)}$. Clearly, the admissible linear correlation coefficient $C$ between the marginals is bounded by some maximum attainable correlation $\hat{C}^{\,(1)}$ and some minimum attainable correlation $\hat{C}^{\,(2)}$. Moreover, \textit{every} admissible correlation $C$, can be represented as a convex combination of the extreme correlations 
\begin{equation}
	C = w \, \hat{C}^{\,(1)} + (1-w)\,\hat{C}^{\,(2)}
	\label{eq:linear_combo_corr}
\end{equation}
for some $w \in [0,1]$. The extreme correlations are clearly extreme points. Extreme Measures in the bivariate case are defined as follows
\begin{definition}[Extreme Measures in 2-dimensions]
Extreme Measures are solutions to the following infinite dimensional Linear Program (LP)

\begin{align}
    \extr & \quad h(P)  \label{eq:optimization_problem_2D} \\
    \text{subject to} & \quad \sum^\infty_{j=0} P_{ij} = Q^{(1)}_i, \quad i=0,1,\dots  \nonumber \\
    & \quad \sum^\infty_{i=0} P_{ij} = Q^{(2)}_j, \quad j=0,1,\dots \nonumber \\ 
    & \quad P_{ij} \geq 0 \quad i,j=0,1,\dots \nonumber
\end{align}
where  $\sum_{i=0}^\infty Q^{(1)}_i=\sum_{j=0}^\infty Q^{(2)}_j=1$. Extremize \textnormal{denotes either max or min and the objective function is} 
\begin{equation}
    h(P) := \mathbb{E}[X_1 X_2] = \sum^\infty_{i=0}\sum^\infty_{j=0} ij\, P_{ij}
    \label{eq:objective_fn_2D} 
\end{equation}
where $P_{ij}=\PP( X_1 = i, X_2 = j)$.
\end{definition}

For completeness, we mention that the infinite dimensional LP\footnote{In practice, probability distributions are truncated to some desired accuracy; we are really dealing with linear programs.} (\ref{eq:optimization_problem_2D}) is a Monge Kantorovich Problem (MKP). This aspect of the problem is not immediately relevant to us; we refer to standard references such as \cite{rachev1998mass} for more details.

The solution to (\ref{eq:optimization_problem_2D}) is an Extreme Joint Distribution that \textit{determines} the Extreme Measures $\hat{P}^{\,(1)}$ and $\hat{P}^{\,(2)}$ which have a one-to-one relationship to the extreme correlations $\hat{C}^{\,(1)}$ and $\hat{C}^{\,(2)}$ \cite{Chiu}. The Extreme Measures (\ref{eq:optimization_problem_2D}) lead to extreme correlations since extremizing the bivariate expectation extremizes the linear correlation coefficient as can be seen in (\ref{eq:objective_fn_2D}). 
Moreover, let
\begin{equation}
	P = w \, \hat{P}^{\,(1)} + (1-w) \, \hat{P}^{\,(2)}
	\label{eq:linear_combo_em}
\end{equation}
where $w$ is the solution of (\ref{eq:linear_combo_corr}) and $\hat{P}^{(1)}$ and $\hat{P}^{(2)}$ are the Extreme Measures having extreme correlations $\hat{C}^{(1)}$ and $\hat{C}^{(2)}$, respectively. Then, it is not hard to show that $P$ is a discrete bivariate probability distribution with marginals $Q^{(1)}$ and $Q^{(2)}$ and correlation coefficient $C$. This insight allows us to reduce the problem of calibration to a simpler problem of solving a linear equation. Thus, if extreme joint distributions can be computed, they can be used to generate extreme correlations (extreme points) to calibrate to the given correlation. If the calibration fails---there is no solution to the linear equation (\ref{eq:linear_combo_corr}) with $w \in [0,1]$---it implies that \textit{no bivariate process with the marginal distributions $Q^{(1)}$ and $Q^{(2)}$ and correlation $C$ exists}. In such an event, the assumptions of the parameter values (including the inference procedures to obtain them) and any raw data should be checked.

%
%
\subsection{The General Case}

The 2-dimensional case described in the previous subsection generalizes to the $d$-dimensional case ($d>2$) described below. However, instead of dealing with a single correlation coefficient, in the general case, we consider a $d\times d$ correlation matrix $C$ where $C_{ij}$ represents the linear correlation coefficient between marginal distributions $Q^{(i)}$ and $Q^{(j)}$. Clearly, we now have to consider more general notions of extremal dependency. One concept of extremal dependency consistent with observations of correlations matrices is to consider only pairwise extremal dependence. That is, we consider pairwise monotonicity, which represents the strongest type of association between random variables and implies maximally positive (comonotonicity) and negative values (antimonotonicity) for the linear correlation coefficient; see \cite{puccetti2015} for more details on extremal dependence concepts in multivariate settings and \cite{Kre} for details on monotonicity as it relates to distributions. In contrast to the bivariate case, there are $n = 2^{\,d-1}$ extreme correlation matrices $\hat{C}^{(j)}$, which are also extreme points \cite{akm}, each described by a \textit{monotonicity structure}.

\begin{definition}[Monotonicity Structure]
A monotonicity structure $\textbf{e}^{(j)}$, where $j \in \{1,\dots,n\}$, is a binary vector describing the pairwise extremal dependency structure between the marginal distributions
\begin{equation}
\textbf{e}^{(j)} = (e_1^{(j)},\dots,e_d^{(j)})
\label{eq:monotonicity_structure}
\end{equation} 
where
\begin{equation*}
e_i^{(j)} = 
\begin{cases}
1, \quad & \text{if $X_1$ and $X_i$ are antimonotone} \cr
0, \quad & \text{if $X_1$ and $X_i$ are comonotone}
\end{cases}
\end{equation*}
assuming that $e_1^{(j)} = 0$. If $e^{(j)}_i = e^{(j)}_k$, then marginal distributions $Q^{(i)}$ and $Q^{(k)}$ have a comonotone dependency relationship and an antimonotone dependency relationship otherwise.
\end{definition}

\begin{remark}
Note that whether $e_1^{(j)}$ is initially set to 0 or 1 does not matter and is an arbitrary choice. 
\end{remark}

Similar to the bivariate case, for each $j=1,2,\dots,n$, each extreme correlation matrix $\hat{C}^{(j)}$ is associated with an Extreme Measure $\hat{P}^{(j)}$, described by a monotonicity structure $\textbf{e}^{(j)}$, as described below.

\begin{definition}[Extreme Measures in $d$-dimensions]
Extreme Measures are solutions to the following multi-objective infinite dimensional LPs

\begin{subequations}\label{eq:multiobjective_problem}
\begin{align}
	\extr & \quad \,\, h_{k,l}^{(j)}(P) &  1 \leq k < l \leq d \label{eq:multiobjective_obj_fn} \\
	\text{subject to} & \quad \sum_{j \in \mathcal{I}_k} \sum^\infty_{i_j = 0} P^{(j)}_{i_1,...i_{k-1},i_k,i_{k+1},...,i_d} = Q^{(k)}_{ i_k } & \parbox{5.5em}{$k=1,2,\dots$ \\ $i_k$ = 0,1,\dots} \label{eq:multiobjective_problem_constraints} \\
	& \quad P_{i_1,\dots,i_d} \geq 0 &\nonumber
\end{align}
\end{subequations}
where 
\begin{equation}
\extr \,\, h_{k,l}^{(j)}(P) = 
\begin{cases}
\max h_{k,l}^{(j)}(P) \qquad \text{if} \,\,\,\, e_k^{(j)} = e_l^{(j)}  \\
\min h_{k,l}^{(j)}(P) \qquad \, \text{if} \,\,\,\, e_k^{(j)} \ne e_l^{(j)} 
\end{cases}
\nonumber
\end{equation}
$\mathcal{I}_k= \{ j : 1 \leq j \leq d, j \neq k \}$, $Q^{(k)}$ represents the $k$-th given marginal distribution and each objective function takes the form
\begin{equation}
	h_{k,l}(P) = \sum^\infty_{i_k=0}\sum^\infty_{i_l=0} i_k i_l \,
	 P^{\,(k,l)}_{i_k,i_l} \quad\quad 1 \leq k < l \leq d
	\label{eq:objective_fn_nd}
\end{equation}
where, similarly,
\begin{equation}
P^{\,(k,l)}_{i_k,i_l} = \sum_{j \in \mathcal{I}_{k,l}} \sum^\infty_{i_j = 0} P_{i_1,...i_{k-1}, i_k, i_{k+1},... ,i_{l-1}, i_l, i_{l+1},...,i_d}\nonumber
\end{equation}
with  $\mathcal{I}_{k,l}= \{ j : 1 \leq j \leq d, j \neq k, \, j \neq l \}$.
\end{definition}

\begin{remark}
There are $m=d(d-1)/2$ objective functions, where each $h_{k,l}(p)$ extremizes the dependency between a pair of coordinates.
\end{remark}

The multi-objective program (\ref{eq:multiobjective_problem}) is, in fact, a multi-objective multi-marginal MKP, the solutions of which determine Extreme Measures. Potential solutions of (\ref{eq:multiobjective_problem}) are multivariate probability measures $P$ which are tensors. Thus, the multi-objective problem is not only tedious to program but practically prohibitively expensive to compute (in terms of both time and storage) for moderate $d$ \cite{Chiu}. One approach that leads to a computable solution to these infinite dimensional problems (\ref{eq:optimization_problem_2D}) 
and (\ref{eq:multiobjective_problem}) is given by the Extreme Joint Distribution (EJD) Theorem, which gives a semi-analytic 
form describing completely the extreme joint distribution. 
\begin{theorem}[EJD Theorem in $d$-dimensions]
Given marginal cumulative distribution functions $F^{(1)}, F^{(2)}, \dots, F^{(d)}$ on $\mathbb{Z}_+$, corresponding to 
the marginal distributions $Q^{(1)}, Q^{(2)}, ..., Q^{(d)}$ in the constraints (\ref{eq:multiobjective_problem_constraints}) and a monotonicity 
structure $\textbf{e}^{(j)}$, where $j \in \{1,\dots,n\}$, the corresponding Extreme Measure is defined by the probabilities
\begin{align}
\label{eq:ejd_p_nd}
\hat{P}^{(j)}_{i_1,\dots,i_\dimensionofmpp} & = \, [\min(\bar{F_1}(i_1 - e^{(j)}_1; e^{(j)}_1),\dots,\bar{F_\dimensionofmpp}(i_\dimensionofmpp-e^{(j)}_\dimensionofmpp;e^{(j)}_\dimensionofmpp))  \\
	& \quad\quad - \max (\bar{F_1}(i_1 + (e_1^{(j)}-1); e^{(j)}_1),\dots,\bar{F_\dimensionofmpp}(i_\dimensionofmpp+(e^{(j)}_\dimensionofmpp-1);e^{(j)}_\dimensionofmpp))]^+ \nonumber
\end{align}
where $[\boldsymbol{\cdot}]^+ = \max(0,\boldsymbol{\cdot})$ and $\bar{F_k}$ is defined as
\begin{equation}
\bar{F_k}(i_k; e_k^{(j)}) = 
	\begin{cases}
		F^{(k)}(i_k)	\quad & \quad \text{if} \,\,\, e_k^{(j)}= 0 \\
        1 - F^{(k)}(i_k) \quad & \quad \text{if} \,\,\, e_k^{(j)} = 1.
	\end{cases}
\label{eq:ejd_nd_aux_fn}
\end{equation}
\label{thm:ejd_d_dim}
\end{theorem}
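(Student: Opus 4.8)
The plan is to recognise $\hat P^{(j)}$ as the joint law of an explicit coupling of the marginals $Q^{(1)},\dots,Q^{(d)}$ constructed from a \emph{single} uniform random variable, and then to harvest both the linear-program constraints and the extremality of the objectives $h^{(j)}_{k,l}$ from that one construction. Concretely, let $U$ be uniform on $[0,1]$ and, for each coordinate $k$, set
$X_k = \inf\{\,i\in\Z_+ : F^{(k)}(i)\ge U\,\}$ when $e^{(j)}_k=0$ and
$X_k = \inf\{\,i\in\Z_+ : F^{(k)}(i)\ge 1-U\,\}$ when $e^{(j)}_k=1$. In either case $X_k$ has c.d.f.\ $F^{(k)}$, so $(X_1,\dots,X_d)$ is automatically a bona fide probability law whose $k$-th marginal is $Q^{(k)}$; that is exactly the constraint~(\ref{eq:multiobjective_problem_constraints}), and nonnegativity and total mass one come for free. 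It then suffices to show that the point-mass of this law at $(i_1,\dots,i_d)$ is the right-hand side of~(\ref{eq:ejd_p_nd}).

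The computational heart is the interval identity. For each $k$ the event $\{X_k=i_k\}$ equals a subinterval $I_k(i_k)\subseteq[0,1]$ of length $Q^{(k)}_{i_k}$: when $e^{(j)}_k=0$ one has $I_k(i_k)=\big(F^{(k)}(i_k-1),\,F^{(k)}(i_k)\big]$, and when $e^{(j)}_k=1$ one has $I_k(i_k)=\big[1-F^{(k)}(i_k),\,1-F^{(k)}(i_k-1)\big)$, with the convention $F^{(k)}(-1)=0$ handling $i_k=0$. Inspecting~(\ref{eq:ejd_nd_aux_fn}) one checks in both branches that the \emph{right} endpoint of $I_k(i_k)$ is $\bar F_k(i_k-e^{(j)}_k;\,e^{(j)}_k)$ and the \emph{left} endpoint is $\bar F_k\big(i_k+(e^{(j)}_k-1);\,e^{(j)}_k\big)$. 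Since a finite intersection of intervals is again an interval whose Lebesgue measure is $\big[\min(\text{right endpoints})-\max(\text{left endpoints})\big]^+$ irrespective of which sides are open or closed,
\begin{equation*}
\hat P^{(j)}_{i_1,\dots,i_d}
= \Pp\Big(\textstyle\bigcap_{k=1}^{d}\{X_k=i_k\}\Big)
= \Big|\textstyle\bigcap_{k=1}^{d} I_k(i_k)\Big|
= \Big[\,{\textstyle\min_k}\,\bar F_k(i_k-e^{(j)}_k;e^{(j)}_k)-{\textstyle\max_k}\,\bar F_k\big(i_k+(e^{(j)}_k-1);e^{(j)}_k\big)\Big]^+,
\end{equation*}
which is precisely~(\ref{eq:ejd_p_nd}).

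For extremality, fix a pair $1\le k<l\le d$. Because $\textbf{e}^{(j)}$ is binary, the relation $e^{(j)}_k=e^{(j)}_l$ partitions the coordinates into (at most) two blocks in a transitive, coherent way, so the single-$U$ construction simultaneously realises, for \emph{every} pair, the comonotone coupling of $(X_k,X_l)$ when $e^{(j)}_k=e^{(j)}_l$ and the antimonotone coupling when $e^{(j)}_k\ne e^{(j)}_l$. Since $h^{(j)}_{k,l}(P)=\E[X_kX_l]=\sum_{i_k,i_l} i_k i_l\,P^{(k,l)}_{i_k,i_l}$ depends on $P$ only through its $(k,l)$ bivariate marginal, the bivariate Fr\'echet--Hoeffding theorem (equivalently the $2$-dimensional EJD theorem of the previous subsection) gives that the comonotone coupling maximises $h^{(j)}_{k,l}$ and the antimonotone one minimises it over all $P$ obeying~(\ref{eq:multiobjective_problem_constraints}). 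Hence $\hat P^{(j)}$ extremizes all $m=d(d-1)/2$ objectives in the directions prescribed by $\textbf{e}^{(j)}$, i.e.\ it solves~(\ref{eq:multiobjective_problem}); that it is moreover an extreme point of the constraint polytope follows, as in the bivariate argument, from the fact that the events $\{X_k=i_k\}$ partition $[0,1]$ into consecutive intervals, so the support of $\hat P^{(j)}$ is a monotone ``staircase'' that cannot be written as a nontrivial mixture of admissible measures.

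The main obstacle is purely the bookkeeping in the interval identity: matching the index shifts $i_k-e^{(j)}_k$ and $i_k+(e^{(j)}_k-1)$ in~(\ref{eq:ejd_p_nd}) to the correct endpoints of $I_k(i_k)$ in \emph{both} the comonotone ($e^{(j)}_k=0$) and antimonotone ($e^{(j)}_k=1$) branches, keeping the open/closed sides consistent so the intersection-length formula holds with no double counting, and verifying the boundary case $i_k=0$ via $F^{(k)}(-1)=0$. Everything else is either immediate from $\hat P^{(j)}$ being a genuine law (the LP constraints) or a direct appeal to the already-established bivariate theory (the extremality), since every objective is a function of a bivariate marginal only.
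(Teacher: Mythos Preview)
The paper does not actually prove this theorem: it is stated without proof and the reader is referred to \cite{Kre} and \cite{akm} for details (``see \cite{Kre} and \cite{akm} for more details''). So there is no in-paper proof to compare against.

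That said, your argument is correct and is the standard one. The single-uniform coupling $X_k=(F^{(k)})^{-1}(U)$ or $(F^{(k)})^{-1}(1-U)$ according to $e_k^{(j)}$ is exactly the construction underlying the EJD algorithm in the cited references; your interval identity is the right computation, and the endpoint bookkeeping matches~(\ref{eq:ejd_p_nd})--(\ref{eq:ejd_nd_aux_fn}) in both branches. The reduction of each objective $h_{k,l}^{(j)}$ to a bivariate marginal and the appeal to the two-dimensional Fr\'echet--Hoeffding/EJD result is precisely how extremality is obtained in \cite{Kre,akm}. The only place to be a bit more careful is the final ``extreme point of the constraint polytope'' remark: the theorem as stated only asserts that~(\ref{eq:ejd_p_nd}) is the Extreme Measure solving~(\ref{eq:multiobjective_problem}), which you have shown; the stronger extreme-point statement is true but needs the monotone-support argument spelled out (or simply a citation), rather than the sketch you give.
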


An accompanying algorithm, the EJD algorithm, provides an efficient numerical method to solve (\ref{eq:multiobjective_problem}) by computing the extreme joint distributions in (\ref{eq:ejd_p_nd})  and their corresponding supports; see \cite{Kre} and \cite{akm} for more details. Note that $\hat{P}^{(j)}$ is very sparse in that most $\hat{P}^{(j)} _{i_1,\dots,i_\dimensionofmpp} = 0$. A complete exposition of the details in the general case can be found in \cite{Chiu}.   

%
%

\subsection{Sampling from Multivariate Extreme Measures}

There are two attractive features of the EJD approach which make sampling from multivariate Extreme Measures simple. The first is that Extreme Measures $\hat{P}^{(k)}$ are monotone distributions \cite{Kre}. Consequently, their     support remains a graph in higher dimensions. This is very convenient for sampling as this means that Extreme Measures can be sampled from via the inverse CDF method. Second, any discrete multivariate probability measure with specified marginals and some desired dependency structure can be represented as a convex combination of Extreme Measures. That is, the one-to-one relationship between (\ref{eq:linear_combo_corr}) and (\ref{eq:linear_combo_em}) extends to the multidimensional case as follows. We first find coefficients $(w_1,\dots,w_n)$ that satisfy $w_j \geq 0$ for $j=1,2,\dots,n$, $\sum_{j=1}^n w_j = 1$ and
\begin{equation}
C = w_1 \hat{C}^{\,(1)} + \dots + w_n \hat{C}^{\,(n)}
\label{eq:calib_jdim}
\end{equation}
and then set
\begin{equation}
P = w_1 \hat{P}^{\,(1)} + \dots + w_n \hat{P}^{\,(n)}
\label{eq:calib_jdim_meas}
\end{equation}
where $\hat{P}^{(j)}$ is the extreme measure satisfying the LP (\ref{eq:multiobjective_problem}) and having the extreme correlation matrix $\hat{C}^{(j)}$.  Since $w_i \geq 0$ for $i = 1,2,...,n$ and $\sum_{i=1}^n w_i = 1$, it follows immediately that $P$ is a probability measure.  Moreover, it follows from the linearity of sums that $P$ has the correlation matrix $C$ given on the left side of (\ref{eq:calib_jdim}).  In addition, since each $\hat{P}^{(j)}$ has marginal distributions $Q^{(1)}, \dots, Q^{(d)}$, it follows that $P$ also has marginal distributions $Q^{(1)},\dots, Q^{(d)}$. 

Note that (\ref{eq:calib_jdim}) can be converted to a constrained system of linear equations by flattening each 
extreme correlation matrix $\hat{C}^{\,(j)}$ into a column vector $A_j \in \mathbb{R}^m$ where $m = d(d-1)/2$.
Since $C$ and all $\hat{C}^{\,(j)}$  are symmetric with 1s on their diagonal, this can be done by taking each row 
in the strictly upper triangular part of each $\hat{C}^{\,(j)}$, appending them into a row vector and taking the transpose
to be $A_j$ to obtain $\textbf{A} = [A_1,\dots,A_n] \in \mathbb{R}^{m \times n}$, representing the extreme points 
of our problem in correlation space. Similarly, we can flatten the correlation matrix $C$ on the left side 
of (\ref{eq:calib_jdim}) to a vector $b \in \mathbb{R}^m$. Then (\ref{eq:calib_jdim}) and the constraints $w_j \geq 0$ for $j = 1, 2, ...,n$ and $\sum_{j=1}^b w_j = 1$ are equivalent to the constrained system of linear equations
\begin{subequations}\label{simple_calibration_prob}
\begin{align}
\textbf{Aw} & = b \label{calib_prob_extrpts_constraints}  \\
\textbf{1}^T \textbf{w} & = 1 \label{calib_prob_constraints}  \\
w_j & \geq 0 \quad j = 1, 2, \dots, n. \label{calib_prob_weights_geq1}
\end{align}
\end{subequations}
There are many possible solutions to the constrained system of equations (\ref{simple_calibration_prob}). One approach is to choose a suitable objective function\footnote{This is also the subject of future work.} and then use (\ref{simple_calibration_prob}) as the constraints for an optimization problem with that objective function.  However, if the goal is just to find any solution to (\ref{simple_calibration_prob}), then a simpler approach is to reformulate (\ref{simple_calibration_prob}) as 
\begin{subequations}\label{simple_reformulate}
\begin{align}
\hat{A}w &= \hat{b} \\
w_j & \geq 0  \qquad j=1,2,\dots,n
\end{align}
\end{subequations}
where $\hat{A}$ is $A$ with the row $1^T$ appended to the bottom of it and $\hat{b}$ is $b$ with a 1 appended to the bottom of it. Then note that (\ref{simple_reformulate}) has the form of the standard constraints for a Linear Programming Problem (LPP). Moreover, the first stage of many LPP codes finds a solution to (\ref{simple_reformulate}). As explained in Section 13.5 of \cite{nocedal2006numerical}, one standard approach to finding a solution to (\ref{simple_reformulate}) is to solve the LPP 
\begin{subequations}\label{phase_1_simple_calib+prob}
\begin{align}
\min & \quad \textbf{1}^T z \\
\text{subject to} & \quad \hat{A}w + Ez = \hat{b} \label{phase_1_constraint1} \\
& \quad (w,z) \geq 0 \label{phase_1_constraint2}
\end{align}
\end{subequations}
where $z \in \mathbb{R}^{m+1}$ and $E$ is a $(m+1)\times(m+1)$ diagonal matrix such that $E_{ii} = +1$ if $\hat{b}_i \geq 0$ and $E_{ii} = -1$ if $\hat{b}_i < 0$. Clearly, $w = 0$ and $z = \abs{b}$ satisfies the constraints (\ref{phase_1_constraint1}) and (\ref{phase_1_constraint2}). So, we can use $w = 0$ and $z =\abs{b}$ as a staring point for the simplex method to solve (\ref{phase_1_simple_calib+prob}). It's clear from the constraint $z \geq 0$ that the solution satisfies $\textbf{1}^T z \geq 0$. Moreover, if $\textbf{1}^T z = 0$ then $z = 0.$ Hence, (\ref{phase_1_simple_calib+prob}) has a solution $\textbf{1}^T z = 0$ if and only if $\hat{A}w = \hat{b}$, $w \geq 0$ has a solution. Hence, the simplex method applied to (\ref{phase_1_simple_calib+prob}) will find a solution to (\ref{simple_calibration_prob}), if a solution exists. 

After obtaining $\{w_j\}^n_{j=1}$ through calibration, as described above, the decomposition of the desired measure as a convex combination of Extreme Measures (\ref{eq:calib_jdim_meas}) provides an easy method for simulation. Since $w_j \geq 0$ for $j=1,2,\dots,n$ and $\sum_{j=1}^n w_j = 1$ we can view $w_j$ as the probability that a draw of $P$ comes from the Extreme Measure $\hat{P}^{(j)}$. Moreover, as noted at the beginning of this subsection, we can easily sample from $\hat{P}^{(j)}$. Therefore, we can utilize methods of discrete random variate simulation (see \cite{Devroye} for a comprehensive exposition) to generate a random variable that has the distribution $P$.

\begin{remark}
Despite the fact that the problem size grows exponentially in $d$, due to the structure of the problem (\ref{simple_calibration_prob}) and the fact that the simplex method needs to explicitly access $m+1$ columns of $\hat{A}$ at a time (assuming you have some clever way to decide which new vector to bring into the active set at each step of the simplex method without explicitly accessing all the columns of A that are in the inactive set) the LP (\ref{phase_1_simple_calib+prob}) can be solved for a surprisingly large $d$, e.g., $d = 51$, which corresponds to $n = 2^{50} \approx 10^{15}$; see \cite{Chiu} and \cite{zoe}. 
\end{remark}

\subsection{Applications}

The novelty in our Backward Simulation approach to the construction of multivariate mixed Poisson processes is that we are able to calibrate to a given correlation structure in the form of a correlation matrix and attain extreme correlations. One application of our BS approach is to Operational Risk \cite{KAY,Kre} where it is necessary to compute regulatory capital that is allocated in the event of an operational loss. Generally, this is accomplished using Monte Carlo scenario generation, which requires calibration to historical data. The BS approach captures ranges of correlation that have been empirically observed (such as negative correlations ) but could not be captured before. The BS approach has also found application in the modeling of credit losses of loans under the default model with correlated defaults. The paper \cite{PenH} emphasizes the importance of modelling correlated defaults (in particular, when the correlations are negative) in credit risk modelling.

%
%

\section{Forward Continuation of the Backward Simulation for Mixed Poisson Processes}
\label{sec:fb_mpp}

Up to this point we have discussed the construction of MPPs within some interval $[0,T]$. A natural question to ask is, what if we want to simulate the process forwards in time, past the original simulation interval. One solution to this was introduced in the Poisson setting in \cite{akm}, known as the Forward Continuation (FC) to Backward Simulation (BS), where the joint distribution was preserved at various future time points, with the interval in between filled in by BS. The main idea of the FC method is to retain the independent increments property.
Note that since the notion of the linear correlation coefficient and our choice of 
multivariate extremal dependency concept is pairwise in nature, the discussion in 
the bivariate setting generalizes immediately to the multivariate case.

In the more general setting of MPPs, the conditional independence of the increments 
poses a challenge. We show that with the right construction, the arguments in 
the Poisson setting extend naturally to the MPP setting. Consider a sequence of time 
intervals $[0, T)$, $[T, 2T)$, $\dots, [mT, (m+1)T)$. Suppose that a bivariate MPP 
$(X_t,Y_t)$ has been simulated in $[0, T)$ using BS and that we wish to continue forward 
the process $(X_{T+\tau},Y_{T+\tau}) = (X_T+\Delta X_\tau, Y_T+\Delta Y_\tau)$ in 
$[T, 2T)$. At the end of the interval $[T, 2T)$, draw a new independent version of 
$(X_T,Y_T)$ and add it to the original $(X_T,Y_T)$ to obtain $(X_{2T},Y_{2T})$ at 
time $2T$, i.e, take $(X_{2T},Y_{2T}) \, \eqod\, (X_T,Y_T)$. For $0 \leq \tau < T$, 
the process $(X_{T+\tau},Y_{T+\tau})$ can be constructed by conditional uniformity given 
the number of events in $[T,2T)$, i.e by Backward Simulation. This retains the independence 
of the marginal increments since $X_T$ is independent of $\Delta X_\tau$, and similarly for $Y_T$, 
yet preserves the joint distribution of the bivariate process $(X_t,Y_t)$ even though, 
in the MPP setting, the marginal increments are only conditionally independent. 
Note that $X_{T+\t}$ remains a MPP due to our construction and the superposition property 
of MPPs \cite{Grand}.

\subsection{Forward Time Structure of Correlations}
\label{subsect:fwd_time_struct_corr}
Now that we have a method to extend a bivariate MPP defined initially within 
the interval $[0,T)$ to an interval $[mT,(m+1)T)$, it is natural to analyze 
the behavior of the correlation coefficient on each interval. It turns out that 
we can prove asymptotic stationarity of the  correlation coefficient 
under Forward Continuation. Indeed, the covariance of the processes $X_t$ and $Y_t$ 
at time $T+\t$ can be written as 
\begin{equation*}
	\cov(X_{T+\tau},Y_{T+\tau}) = \cov(X_T,Y_T) + \cov(X_\tau,Y_\tau)
\end{equation*}
since $X_T$ is independent of $\Delta Y_\tau$ and vice versa. From (\ref{eq:covariance_t_T}), we have:
\begin{equation}
	\cov(X_{T+\tau},Y_{T+\tau}) = (1+\frac{\tau^2}{T^2})\cov(X_T,Y_T).
	\label{eq:cov_fn_t}
\end{equation}
Dividing each side by $\sigma{(X_{T+\tau})}\,\sigma{(Y_{T+\tau})}$, we obtain the correlation coefficient
\begin{align}
	\rho(T+\tau) & = \rho(T)\,(1+\frac{\tau^2}{T^2})\cdot\frac{\sigma{(X_T)}\sigma{(Y_T)}}{\sigma{(X_{T+\tau})}\sigma{(Y_{T+\tau})}}. \nonumber 
\end{align}
Using a similar argument we can extend (\ref{eq:covariance_t_T}) to
\begin{equation}
	\cov(X_{mT+\tau},Y_{mT+\tau}) = (m+\frac{\tau^2}{T^2})\cov(X_T,Y_T).
\end{equation}
for $m=0,1,2,\dots$
Thus,
\begin{align}
	\rho(mT+\tau) & = \rho(T)(m + \frac{\t^2}{T^2})\cdot\frac{\sigma{(X_T)}\sigma{(Y_T)}}{\sqrt{\sigma^2{(X_{mT}+\Delta X_\t)}}\sqrt{\sigma^2{(Y_{mT}+\Delta Y_\t)}}} \nonumber \\
	& = \rho(T)(m + \frac{\t^2}{T^2})\cdot\frac{\sigma{(X_T)}\sigma{(Y_T)}}{\sqrt{\sigma^2{(X_{mT})+\sigma^2(\Delta X_\t)}}\sqrt{\sigma^2{(Y_{mT})+\sigma^2(\Delta Y_\t)}}} \nonumber \\
	& = \rho(T)(m + \frac{\t^2}{T^2})\nonumber \\
	& \qquad \qquad \boldsymbol{\cdot} \frac{\sigma{(X_T)}\sigma{(Y_T)}}{\sqrt{m\,\sigma^2{(X_{T})+\sigma^2(X_\t)}}\sqrt{m\,\sigma^2{(Y_{T})+\sigma^2(Y_\t)}}}. \label{eq:fwd_cont_form}
\end{align}
In going from the first line in (\ref{eq:fwd_cont_form}) to the second line, we use the property that the variance term 
$\sigma^2(X_{mT}+\Delta X_{\t})$ can be decomposed as $\sigma^2(X_{mT}) + \sigma^2(\Delta X_{\t})$  since $\Delta X_\t \, \eqod \, X_\t$ 
and $X_{mT}$ is independent of $ \Delta X_\t$. Similarly, 
$\sigma^2(Y_{mT} + \Delta Y_\t) = \sigma^2(Y_{mT}) + \sigma^2(\Delta Y_\t)$. 
In going from the second line in (\ref{eq:fwd_cont_form}) to the third line, we use the property that $\sigma^2(X_{mT})$ can be 
written as $m \,\sigma^2(X_{T})$ which can be seen as follows. Consider that $\sigma^2(X_{2T}) = \sigma^2(X_{T+T}) = 
\sigma^2(X_T+\hat{X}_T)   = \sigma^2(X_{T}) + \sigma^2(\hat{X}_{T}) = 2\sigma^2(X_T)$, where $X_T$ and $\hat{X}_T$ are iid. So, by induction on m, we get that $\sigma^2(X_{mT}) = 
m\sigma^2(X_T)$. Similarly, $\sigma^2(Y_{mT}) = m\sigma^2(Y_T)$.
\begin{theorem}[Asymptotic Stationarity of the Forward Continuation]
The correlation $\rho(mT+\t)$ achieves asymptotic stationarity as $m\to\infty$:
\begin{equation}
	\lim_{m \to \infty} \rho(mT+\t) = \rho(T), \quad \tau \in [0,T].
\end{equation}
\label{thm:asympt_statonarity}
\end{theorem}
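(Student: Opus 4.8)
The plan is to take the limit directly in the explicit formula~(\ref{eq:fwd_cont_form}) derived just above the theorem statement, so almost all of the work has already been done. Recall that~(\ref{eq:fwd_cont_form}) reads
\[
\rho(mT+\tau) = \rho(T)\Bigl(m + \frac{\tau^2}{T^2}\Bigr)\cdot
\frac{\sigma(X_T)\,\sigma(Y_T)}{\sqrt{m\,\sigma^2(X_T)+\sigma^2(X_\tau)}\,\sqrt{m\,\sigma^2(Y_T)+\sigma^2(Y_\tau)}}.
\]
First I would observe that, for fixed $\tau \in [0,T]$, the quantities $\sigma^2(X_T)$, $\sigma^2(Y_T)$, $\sigma^2(X_\tau)$, $\sigma^2(Y_\tau)$ are all finite constants not depending on $m$ (finiteness of the second moments is guaranteed by the formula $\sigma^2(X_t) = \bar\la t + \sigma^2(\la)t^2$ in Section~\ref{sec:MPP}, applied to each marginal). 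Hence the numerator grows linearly in $m$ like $m\,\sigma(X_T)\sigma(Y_T)$, and each factor under the square root in the denominator grows linearly in $m$, so the denominator grows like $\sqrt{m\,\sigma^2(X_T)}\cdot\sqrt{m\,\sigma^2(Y_T)} = m\,\sigma(X_T)\sigma(Y_T)$.

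Concretely, I would divide numerator and denominator by $m$. The prefactor becomes $\rho(T)\,(1 + \tau^2/(mT^2)) \to \rho(T)$. The denominator, after pulling a factor $m$ out of each square root, becomes
\[
\sqrt{\sigma^2(X_T) + \tfrac{1}{m}\sigma^2(X_\tau)}\,\cdot\,\sqrt{\sigma^2(Y_T) + \tfrac{1}{m}\sigma^2(Y_\tau)} \;\longrightarrow\; \sigma(X_T)\,\sigma(Y_T)
\]
as $m\to\infty$, by continuity of the square root and the fact that $\tfrac{1}{m}\sigma^2(X_\tau)\to 0$ and $\tfrac{1}{m}\sigma^2(Y_\tau)\to 0$. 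The ratio $\sigma(X_T)\sigma(Y_T)$ in the numerator of~(\ref{eq:fwd_cont_form}) then cancels exactly against this limit, leaving $\lim_{m\to\infty}\rho(mT+\tau) = \rho(T)$, uniformly in $\tau\in[0,T]$ if one wishes to note that (since $\tau^2/T^2\le 1$ and $\sigma^2(X_\tau)\le\sigma^2(X_T)$, etc.) the error terms are bounded uniformly in $\tau$.

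There is no real obstacle here; the theorem is essentially a corollary of the algebra already carried out to produce~(\ref{eq:fwd_cont_form}), and the only thing worth stating carefully is that the structural identities $\sigma^2(X_{mT}) = m\,\sigma^2(X_T)$ and $\sigma^2(X_{mT}+\Delta X_\tau) = \sigma^2(X_{mT}) + \sigma^2(\Delta X_\tau)$, which rely on the independence of increments built into the Forward Continuation construction of Section~\ref{sec:fb_mpp}, are what makes the denominator grow exactly linearly in $m$ — and these were already justified in the paragraph preceding the theorem. If one wanted to be fully rigorous about the limit, the one point to check is that $\sigma(X_T)>0$ and $\sigma(Y_T)>0$, i.e.\ the processes are genuinely random (otherwise $\rho(T)$ itself is undefined), which we may take as a standing nondegeneracy assumption.
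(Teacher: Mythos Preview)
Your proposal is correct and follows essentially the same approach as the paper: you factor $m$ out of each square root in the denominator of~(\ref{eq:fwd_cont_form}), rewrite the prefactor as $\rho(T)(1+\tau^2/(mT^2))$, and pass to the limit, which is exactly what the paper does. Your additional remarks on uniformity in $\tau$ and the nondegeneracy assumption $\sigma(X_T),\sigma(Y_T)>0$ are not in the paper's proof but are sensible side observations.
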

\begin{proof}

The R.H.S. of (\ref{eq:fwd_cont_form}) can be rewritten as follows
\begin{align*}
\rho(T)(m + \frac{\t^2}{T^2})\cdot & \frac{\sigma{(X_T)}\sigma{(Y_T)}}{\sqrt{m\,\sigma^2{(X_{T})+\sigma^2(X_\t)}}\sqrt{m\,\sigma^2{(Y_{T})+\sigma^2(Y_\t)}}} \\
= \rho(T)(m + \frac{\t^2}{T^2})\cdot & \frac{1}{m} \cdot \frac{\sigma{(X_T)}\sigma{(Y_T)}}{\sqrt{\sigma^2{(X_{T})+(1/m)\sigma^2(X_\t)}}\sqrt{\sigma^2{(Y_{T})+(1/m)\sigma^2(Y_\t)}}}. \\
\end{align*}
Passing to the limit as $m \to \infty$ in the standard manner, we obtain that
\begin{equation*}
\lim_{m \to \infty} \rho(mT+\t) = \rho(T)
\end{equation*}
as was to be proved.
\end{proof}

A graphical illustration of Theorem \ref{thm:asympt_statonarity} can be found in Figure \ref{fig:fwdbkwd_cont_nb}, which shows the good agreement between the analytic and numerical results. Note that while the correlations at the calibrated integer grid points $nT$ are exact, the correlation structure in between grid points, generated via the FC method, requires a few time periods in order to settle to the asymptotic value of $\rho(T)$. 

The first few time periods can be used as ``burn in" periods to achieve a more stable and accurate desired value for the correlation between processes for the in-between time periods as filled in by BS. This can be also used to construct bivariate MPPs that display a constant time structure of correlation at some given correlation value instead of exhibiting the linear behavior in the interval $[0,T]$. Note that, depending on the choice of simulation, the process obtained by simulating purely through BS on [0,2T) will have a different correlation structure than a process simulated using BS on [0,T) and FC on [T,2T). This is a user-defined choice that is dependent on the needs of the application.

\begin{figure}[h]
\centering
\includegraphics[width=1\linewidth]{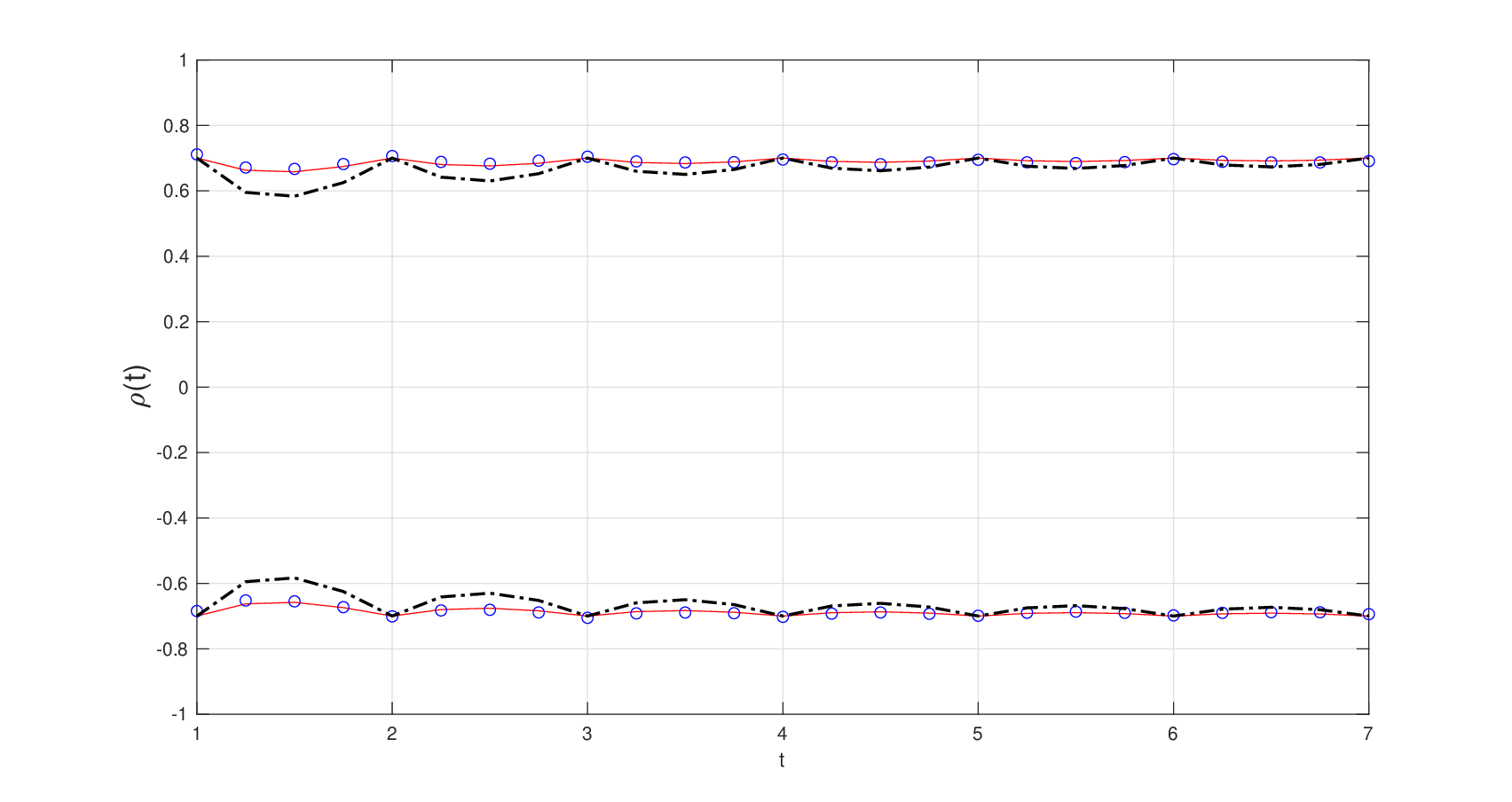}
\caption{Correlation structure for a bivariate NB process where the first NB process has a mean of 5 and a variance of 5 and the second NB process has a mean of 5 and a variance of 30. The bivariate process was calibrated to a positive correlation of $\rho(1) = 0.7$ and a negative correlation of $\rho(1) = -0.7$ at simulation time $T=1$. Then, the bivariate process was extended forward via Forward Continuation to $T=7$. The blue circles represent the correlations from Monte Carlo simulations of the bivariate NB process constructed through FC of the BS. The red line represents the correlation obtained analytically. The black dotted line represents the correlation structure of a bivariate Poisson process with the same mean parameters and calibrated to the same positive and negative correlations, derived analytically. }
\label{fig:fwdbkwd_cont_nb}
\end{figure}

%
%
\section{Concluding remarks}
\label{sec_cr}
In this paper, we extended the Backward Simulation (BS) method and the Forward Continuation of the BS method from the class of Poisson processes to the more general class of multivariate Mixed Poisson Processes. The advantages of the Backward approach over the Forward approach for generating sample paths of multivariate Mixed Poisson processes in some simulation interval $[0,T]$ are numerous: simple and efficient simulation in $d$-dimensions; specification of a dependency structure in the form of a given correlation matrix $C$ at terminal simulation time $T$; a wider range of possible correlations between the marginal distributions.

The Backward Simulation approach is applicable to any process that exhibits the order statistic property. 
For example, Backward Simulation is applicable to the class of Negative Binomial L\'{e}vy Processes \cite{bae2}. 
In fact, it is applicable to a more general class of processes known as Compound Poisson Processes which also posses 
a linear correlation structure under BS. It is also applicable to the inhomogeneous Poisson Processes. Extending the BS approach to Compound Poisson Processes and inhomogeneous Poisson Processes will be the 
subject of our forthcoming work.

\section*{Acknowledgment(s)}

This research was supported in part by the Natural Sciences and Engineering Research Council (NSERC) of Canada.





\printbibliography

@article{KAY,
author = {Duch, K. and Kreinin, A. and Jiang, Y.},
title = {New Approaches to Operational Risk Modeling},
journal = {IBM Journal of Research and Development},
volume = {3},
pages = {31-45},
year = {2014}
}

@incollection{Kre,
author = {Kreinin, A.},
title = {Correlated {P}oisson Processes and Their Applications in Financial Modeling},
editor = {Akansu, A. N and Kulkarni, S. R and Malioutov, D. M},
booktitle = {Financial Signal Processing and Machine Learning},
publisher = {John Wiley \& Sons},
pages = {191-230},
chapter = {9},
year = {2016}
}

@article{McNeil,
title={Common {P}oisson shock models: applications to insurance and credit risk modelling},
author={Lindskog, F. and McNeil, A. J.},
journal={ASTIN Bulletin: The Journal of the IAA},
volume={33},
number={2},
pages={209--238},
year={2003}
}

@article{akm,
title={Correlated Multivariate {P}oisson Processes and Extreme Measures},
author={Chiu, M. and Jackson, K. R. and Kreinin, A.},
journal={Model Assisted Statistics and Applications},
volume={12},
number={4},
pages={369--385},
year={2017},
publisher={IOS Press}
}

@article{Griff,
title={Aspects of correlation in bivariate {P}oisson distributions and processes},
author={Griffiths, R. C. and Milne, R. K. and Wood, R.},
journal={Australian \& New Zealand Journal of Statistics},
volume={21},
number={3},
pages={238--255},
year={1979},
publisher={Wiley Online Library}
}

@book{Grand,
title={Mixed Poisson Processes},
author={Grandell, J.},
year={1997},
publisher={CRC Press}
}

@article{Aue,
title={{LDA} at Work: {D}eutsche {B}ank's Approach to Quantifying Operational Risk},
journal={The Journal of Operational Risk},
author={Aue, F., and Kalkbrener, M.},
year={2006},
pages={49--95},
volume={1},
number={4}
}

@article{BNO,
title={Negative binomial processes},
author={Barndorff-Nielsen, O. and Yeo, G. F.},
journal={Journal of Applied Probability},
volume={6},
number={3},
pages={633--647},
year={1969},
publisher={Cambridge University Press}
}

@phdthesis{Lundb,
title={On Random Processes and their Application to Sickness and Accident Statistics},
author={Lundberg, O.},
year={1964},
school={Almqvist \& Wiksell, Uppsala.}
}

@article{EmbPuc,
title={Aggregating risk capital, with an application to operational risk},
author={Embrechts, P. and Puccetti, G.},
journal={The Geneva Risk and Insurance Review},
volume={31},
number={2},
pages={71--90},
year={2006},
publisher={Springer}
}

@book{RCont,
title={Financial Modelling with Jump Processes},
author={Cont, R. and Tankov, P.},
year={2004},
publisher={CRC Press}
}

@article{Bock2,
title={Multivariate models for operational risk},
author={B{\"o}cker, K. and Kl{\"u}ppelberg, C.},
journal={Quantitative Finance},
volume={10},
number={8},
pages={855--869},
year={2010},
publisher={Taylor \& Francis}
}

@article{BN1,
title={Non-{G}aussian {O}rnstein--{U}hlenbeck-based models and some of their uses in financial economics},
author={Barndorff-Nielsen, O. and Shephard, N.},
journal={Journal of the Royal Statistical Society: Series B (Statistical Methodology)},
volume={63},
number={2},
pages={167--241},
year={2001},
publisher={Wiley Online Library}
}

@article{Chav,
title={Quantitative models for operational risk: extremes, dependence and aggregation},
author={Chavez-Demoulin, V. and Embrechts, P. and Ne{\v{s}}lehov{\'a}, J.},
journal={Journal of Banking \& Finance},
volume={30},
number={10},
pages={2635--2658},
year={2006}
}

@book{Shev,
title={Modelling {O}perational {R}isk {U}sing {B}ayesian {I}nference},
author={Shevchenko, P.},
year={2011},
publisher={Springer Science \& Business Media}
}

@book{Panj,
title={Operational {R}isk: {M}odeling {A}nalytics},
author={Panjer, H. H.},
year={2006},
publisher={John Wiley \& Sons}
}

@article{Pet,
title={Dynamic operational risk: modeling dependence and combining different sources of information},
author={Peters, G. and Shevchenko, P. and Wuthrich, M.},
journal={Journal of Operational Risk},
volume={4},
number={2},
pages={69--104},
year={2009}
}

@phdthesis{Zocher,
title={Multivariate {M}ixed {P}oisson {P}rocesses},
author={Zocher, M.},
year={2005},
school={Almqvist \& Wiksell, Uppsala.}
}

@phdthesis{Chiu,
title={Correlated {M}ultivariate {P}oisson {P}rocesses},
author={Chiu, M.},
year={2020},
school={University of Toronto}
}

@misc{zoe,
title={A Modified Simplex Method for Solving ${A}x = b$, $x \ge 0$, for Very Large {A}
Arising from a Calibration Problem},
author={MacDonald, Z.},
year={2020},
school={University of Toronto},
howpublished = {    Computer Science Department,
                    University of Toronto},
url={http://www.cs.toronto.edu/NA/reports.html#Zoe_MacDonald_MSc_Research_Paper}
}

@article{Dian,
title={Dependent events and operational risk},
author={Powojowski, M. and Reynolds, D. and Tuenter, H.},
journal={Algo Research Quarterly},
volume={5},
number={2},
pages={65--73},
year={2002}
}

@article{Nesl,
title={Infinite mean models and the {LDA} for operational risk},
author={Ne{\v{s}}lehov{\'a}, J. and Embrechts, P. and Chavez-Demoulin, V.},
journal={Journal of Operational Risk},
volume={1},
number={1},
pages={3--25},
year={2006}
}

@article{Nels,
title={Discrete bivariate distributions with given marginals and correlation},
author={Nelsen, R.},
journal={Communications in Statistics-Simulation and Computation},
volume={16},
number={1},
pages={199--208},
year={1987}
}

@article{Hoeffd,
title={Masstabinvariante korrelations-theorie.},
author={Hoeffding, W.},
journal={Schriften Math. Inst. Univ. Berlin.},
volume={2},
pages={181-233},
year={1940},
}

@article{Whitt,
title={Bivariate distributions with given marginals},
author={Whitt, W.},
journal={The Annals of Statistics},
pages={1280--1289},
year={1976},
publisher={JSTOR}
}

@article{Frechet,
title={Sur les tableaux de corr{\'e}lation dont les marges sont donn{\'e}es},
author={Fr{\'e}chet, M.},
journal={Revue de L'Institut International De Statistique},
volume={14},
pages={53--77},
year={1951}
}

@book{Devroye,
title={Non-Uniform Random Variate Generation},
author={Devroye, L.},
year={1986},
publisher={Springer}
}

@article{baeKrenin,
author = {Bae, T. and Kreinin, A.},
title = {A backward construction and simulation of correlated {P}oisson processes},
journal = {Journal of Statistical Computation and Simulation},
volume = {87},
number = {8},
pages = {1593-1607},
year  = {2017},
publisher = {Taylor & Francis},
doi = {10.1080/00949655.2016.1277428},
URL = {https://doi.org/10.1080/00949655.2016.1277428},
eprint = {https://doi.org/10.1080/00949655.2016.1277428}
}

@article{bae2,
author = {Bae, T. and Mazjini, M.},
title = {Backward {S}imulation of {C}orrelated {N}egative {B}inomial {L}\'{e}vy {P}rocesses},
journal = {Mathematics and Statistics},
volume = {7},
number = {5},
pages = {191-196},
year  = {2019},
doi = {10.13189/ms.2019.070505},
}

@article{crump1975point,
  title={On point processes having an order statistic structure},
  author={Crump, K. S.},
  journal={Sankhy{\=a}: The Indian Journal of Statistics, Series A},
  pages={396--404},
  year={1975},
  publisher={JSTOR}
}

@book{rachev1998mass,
  title={Mass Transportation Problems: Volume I: Theory},
  author={Rachev, S. T. and R{\"u}schendorf, L.},
  year={1998},
  publisher={Springer Science \& Business Media}
}

@article{puccetti2015,
author = "Puccetti, G. and Wang, R.",
doi = "10.1214/15-STS525",
fjournal = "Statistical Science",
journal = "Statist. Sci.",
month = "11",
number = "4",
pages = "485--517",
publisher = "The Institute of Mathematical Statistics",
title = "Extremal {D}ependence {C}oncepts",
url = "https://doi.org/10.1214/15-STS525",
volume = "30",
year = "2015"
}

@book{feller2008,
title={An Introduction to Probability Theory and Its Applications},
author={Feller, Willliam},
volume={1},
year={2008},
publisher={John Wiley \& Sons}
}

@book{nocedal2006numerical,
title={Numerical Optimization},
author={Nocedal, Jorge and Wright, Stephen},
year={2006},
publisher={Springer Science \& Business Media}
}

\end{document}